\newcommand{\lvl}[2][0]{#2^{(#1)}}
\newcommand{\bDel}{\mathbf{\Delta}}
\newcommand{\vst}{v_\mathsf{st}}
\newcommand{\ved}{v_{\mathsf{ed}}}
\newcommand{\BS}{\mathsf{BS}}
\newcommand{\LCA}{\mathsf{LCA}}
\newcommand{\comment}[1]{%
  \text{\phantom{(#1)}} \tag{#1}
}
\newif\ifshowauthors
\title{Recursive Error Reduction for Regular Branching Programs}
 \author{  Eshan Chattopadhyay\thanks{Supported by a Sloan Research Fellowship and NSF CAREER Award 2045576.}\\ Cornell University\\ \texttt{eshan@cs.cornell.edu} \and   \and Jyun-Jie Liao\footnotemark[1] \\ Cornell University\\ \texttt{jjliao@cs.cornell.edu} }
 \date{November 26, 2023}
 \author{}
 \date{}
\begin{document}

\maketitle
\begin{abstract}
In a recent work, Chen, Hoza, Lyu, Tal and Wu (FOCS 2023) showed an improved error reduction framework for the derandomization of regular read-once branching programs (ROBPs). Their result is based on a clever modification to the inverse Laplacian perspective of space-bounded derandomization, which was originally introduced by Ahmadinejad, Kelner, Murtagh, Peebles, Sidford and Vadhan (FOCS 2020).

In this work, we give an alternative error reduction framework for regular ROBPs. Our new framework is based on a binary recursive formula from the work of Chattopadhyay and Liao (CCC 2020), that they used to construct weighted pseudorandom generators (WPRGs) for general ROBPs. 

Based on our new error reduction framework, we give alternative proofs to the following results for regular ROBPs of length $n$ and width $w$, both of which were proved in the work of Chen et al. using their error reduction: 
\begin{itemize}
\item
There is a WPRG with error $\varepsilon$ that has seed length $$\tilde{O}(\log(n)(\sqrt{\log(1/\varepsilon)}+\log(w))+\log(1/\varepsilon)).$$ 
\item
There is a (non-black-box) deterministic algorithm which estimates the expectation of any such program within error $\pm\varepsilon$ with space complexity
$$\tilde{O}(\log(nw)\cdot\log\log(1/\varepsilon)).$$
This was first proved in the work of Ahmadinejad et al., but the proof by Chen et al. is simpler.
\end{itemize}
Because of the binary recursive nature of our new framework, both of our proofs are based on a straightforward induction that is arguably simpler than the Laplacian-based proof in the work of Chen et al. 

In fact, because of its simplicity, our proof of the second result directly gives a slightly stronger claim: our algorithm computes a $\varepsilon$-singular value approximation (a notion of approximation introduced in a recent work by Ahmadinejad, Peebles, Pyne, Sidford and Vadhan (FOCS 2023)) of the random walk matrix of the given ROBP in space $\tilde{O}(\log(nw)\cdot\log\log(1/\varepsilon))$. It is not clear how to get this stronger result from the previous proofs. 

\end{abstract}

\section{Introduction}
A central problem in complexity theory is to understand to what extent is randomness useful in space-bounded computation. It is widely conjectured that every randomized algorithm can be made deterministic with only a constant-factor blowup in space, i.e. $\BPL=\L$. A central approach to derandomize $\BPL$ is to construct explicit pseudorandom generators (PRGs) for standard-order read-once branching programs (ROBPs), which we formally define below.
\begin{definition}[ROBPs]
A (standard-order) ROBP $B$ of length $n$ and width $w$ is specified by a start state $v_0\in [w]$, a set of accept states $V_{\mathrm{acc}}$ and $n$ transition functions $B_i:[w]\times\bits{}\to {[w]}$ for $i$ from $1$ to $n$. The ROBP $B$ computes a function $B:\bits{n}\to\bits{}$ as follows. Given an input $x\in\bits{n}$, define $v_i=B_i(v_{i-1},x_i)$, where $x_i$ denotes the $i$-th bit of $x$. Then output $B(x)=1$ if $v_n\in V_{\mathrm{acc}}$, or $B(x)=0$ otherwise. 
\end{definition}
\begin{remark}
Equivalently, one can view a ROBP $B$ as a directed graph as follows. Consider $n+1$ layers of nodes $L_0,L_1,\ldots,L_n$, each having size $w$, and label the nodes in each $L_i$ with $[w]$. For every $i\in[n],v\in[w],b\in\bits{}$, construct an edge with label $b$ from $v$ in $L_{i-1}$ to $B_i(v,b)$ in $L_i$. Then the computation of $B(x)$ corresponds to a walk following label $x$ from $L_0$ to $L_n$. In this paper we usually consider the equivalent graph view, and we refer to $L_i$ as \emph{layer $i$}.   
\end{remark}

\begin{definition}[PRGs]
Let $\cF$ be a class of functions $f:\bits{n}\to\bits{}$. An $\eps$-PRG for $\cF$ is a function $G:\bits{d}\to\bits{n}$ such that for every $f\in\cF$,
$$\abs{\ex[x\sim \bits{n}]{f(x)}- \ex[s\sim\bits{d}]{f(G(s))}}\le \eps.$$ 
We say $G$ $\eps$-fools the class $\cF$ if $G$ is an $\eps$-PRG for $\cF$. We call $d$ the \emph{seed length} of $G$. We say $G$ is explicit if it can be computed in space $O(d)$.\footnote{Throughout this paper, when we say a function $f$ is explicit, it means the function $f$ can be computed in space $O(n)$ where $n$ is the input length.} 
\end{definition}

It can be shown (via probabilistic method) that there exists a $\eps$-PRG for width-$w$ length-$n$ ROBP with seed length $O(\log(nw/\eps))$, which is optimal. Furthermore, an \emph{explicit} PRG with such seed length would imply $\BPL=\L$. In a seminal work, Nisan~\cite{Nisan92} constructed an explicit PRG with seed length $O(\log(n)\cdot\log(nw/\eps))$, which is only a $O(\log(n))$ factor away from optimal. Nisan~\cite{Nisan94} then used this PRG to prove that any problem in $\BPL$ can be deterministically computed in $O(\log^2(n))$ space and $\poly(n)$ time. Another remarkable work by Saks and Zhou~\cite{SZ99} also applied Nisan's generator in a non-trivial way to show that any problem in $\BPL$ can be deterministically computed in $O(\log^{3/2}(n))$ space.

\subsection{Weighted PRGs}\label{subsec:WPRG}
Despite decades of effort, the seed length of Nisan's PRG remains the state-of-the-art for width $w\ge 4$. In fact, even for the $w=3$ special case, Nisan's seed length remained unbeatable until a recent work by Meka, Reingold and Tal~\cite{MRT19} which improved the seed length to $\tilde{O}(\log(n)\log(1/\eps))$. This has motivated researchers to study relaxed notions of PRGs and their applications in the derandomization of $\BPL$. A well-studied notion is that of a hitting set generator (HSG), which is the ``one-sided" variant of a PRG.
\begin{definition}[HSGs]
Let $\cF$ be a class of functions $f:\bits{n}\to\bits{}$. A $\eps$-HSG for $\cF$ is a function $G:\bits{d}\to\bits{n}$ such that for every $f\in\cF$ s.t. $\ex[x\sim\bits{n}]{f(n)}>\eps$, it holds that $\ex[s\sim\bits{d}]{f(G(s))}>0$.
\end{definition}
The study of explicit HSGs for ROBPs has a long history, starting from the seminal work by Ajtai, Koml\'{o}s and Szemer\'{e}di~\cite{AKS87}. While being weaker than PRGs, explicit constructions of HSGs can still be used to derandomize randomized log-space algorithms with one-sided error ($\RL$). In fact, a recent work by Cheng and Hoza~\cite{CH20} shows that an explicit HSG with optimal seed length $O(\log(nw/\eps))$ already implies $\BPL=\L$.   

In 2018, Braverman, Cohen and Garg~\cite{BCG20} introduced another relaxed notion of PRG called \emph{weighted PRG} (WPRG). In this relaxed notion, each output string of $G$ is further assigned a real weight that can \emph{possibly be negative}.
\begin{definition}
Let $\cF$ be a class of functions $f:\bits{n}\to\bits{}$. A $\eps$-WPRG is a pair of functions $(\rho,G):\bits{d}\to\bits{n}\times \bbR$ such that for every $f\in\cF$, 
$$\abs{\ex[x\sim \bits{n}]{f(x)}-\ex[s\sim\bits{d}]{\rho(s)\cdot f(G(s))}}\le \eps.$$
\end{definition}
Surprisingly, by simply allowing negative weights, \cite{BCG20} showed how to construct an explicit $\eps$-WPRG with seed length 
$$\tilde{O}(\log(n)\log(nw)+\log(1/\eps)),$$
which has almost optimal dependence on $\eps$. A sequence of followup work~\cite{CL20,CDRST21,PV21,Hoza21} further improved the seed length with simpler WPRG constructions. In particular, Hoza~\cite{Hoza21} completely removed the hidden $\log\log$ factors and improve the seed length to $O(\log(n)\log(nw)+\log(1/\eps))$.

It was observed in \cite{BCG20} that $\eps$-WPRGs implies $\eps$-HSGs. In addition, WPRGs seem closer to PRGs than HSGs in the sense that one can use a WPRG to estimate the expectation of a ROBP $f$ by simply enumerating all the seeds. In fact, following a suggestion in \cite{BCG20},  \cite{CL20} proved that a WPRG with good enough bound on the output of $\rho$ can be used in the derandomization framework by Saks and Zhou~\cite{SZ99}. Hoza~\cite{Hoza21} then used the WPRG in \cite{CDRST21,PV21} to prove that $\BPL$ can be derandomized in deterministic space $O(\log^{3/2}(n)/\sqrt{\log\log(n)})$. This was the first improvement over Saks and Zhou's decades-old result. 

\subsection{Regular branching programs}
For the original notion of PRGs, while there has been no improvement over Nisan's seed length for general (standard-order) ROBPs, a lot of progress has been made in some restricted families. One important example is the setting of \emph{regular ROBPs}, which is the main focus of this work.
\begin{definition}[Regular ROBPs]
We say a (standard-order) ROBP $B$ is \emph{regular} if for every transition function $B_i:[w]\times\bits{}\to {[w]}$ in $B$, every state $v\in[w]$ has exactly $2$ pre-images. 
\end{definition}
An important reason to study this family is that general ROBPs can be reduced to regular ROBPs~\cite{RTV06,BHPP22}. In fact, a surprisingly simple proof in a recent work by Lee, Pyne and Vadhan~\cite{LPV23} shows that any function that can be computed by a ROBP of length $n$ and width $w$ can also be computed by a regular ROBP of width $O(nw)$.

In 2010, Braverman, Rao, Raz and Yehudayoff~\cite{BRRY14} proved that the INW generator~\cite{INW94} with proper choices of parameters is in fact a PRG for regular ROBPs with seed length $O(\log(n)\cdot (\log\log(n)+\log(w/\eps)))$. This is better than Nisan's PRG's seed length when $\log(w/\eps)=o(\log(n))$. More generally, they introduced the ``weight" measure for ROBPs and proved that an INW generator with fixed parameters has error proportional to the weight. They then showed that regular ROBPs have smaller weight than general ROBPs when $w\ll n$, which implies their better seed length bound. (See \Cref{sec:WPRG} for the formal definitions.) Their better PRG construction for ``small-weight" ROBPs also turns out to be an important ingredient of the PRG for width-3 ROBPs in \cite{MRT19}.   

Recently, Ahmadinejad, Kelner, Murtagh, Peebles, Sidford and Vadhan~\cite{AKMPSV20} proved a remarkable result that it takes only $\tilde{O}(\log(nw))$ space to estimate the expectation of a regular ROBP $B$ in a non-black-box way. In fact, they designed an algorithm that can estimate the expectation of $B$ to a very high precision without much overhead:
\begin{theorem}\label{thm:white-box}
For every $\eps>0$ there is a deterministic algorithm which takes a regular ROBP $B$ of length $n$ and width $w$ as input, and computes a value within $\ex[x]{B(x)}\pm\eps$ in space complexity $\tilde{O}(\log(nw)\log\log(1/\eps))$. 
\end{theorem}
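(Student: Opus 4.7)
The plan is to recast the problem as approximating a product of doubly stochastic matrices, and to design a divide-and-conquer algorithm whose recursion depth is only $O(\log\log(1/\eps))$. Associate to step $i$ the transition matrix $M_i\in\mathbb{R}^{w\times w}$ defined by $M_i[u,v]=\Pr_b[B_i(u,b)=v]$; regularity forces each $M_i$ to be doubly stochastic, and the target quantity is $e_{v_0}^{\top} M_1 M_2 \cdots M_n \mathbf{1}_{V_{\mathrm{acc}}}$. The algorithm will recursively produce, for each dyadic sub-interval, an approximation of the corresponding sub-product whose error \emph{squares} as we climb one level of the recursion, so that only $O(\log\log(1/\eps))$ levels are needed to go from a constant error down to $\eps$.

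The engine of the recursion is a Chattopadhyay--Liao-style error-reduction step. Suppose we already have a procedure $\mathsf{Approx}_\delta$ that, given $m$ consecutive transition matrices, estimates their product (in the appropriate bilinear / spectral sense) to within $\delta$ in space $S(m,\delta)$. For a length-$2m$ product, decompose
$$M_L M_R = \tilde M_L \tilde M_R + (M_L-\tilde M_L)\,M_R + \tilde M_L\,(M_R-\tilde M_R),$$
where $\tilde M_{L/R}$ are produced by $\mathsf{Approx}_\delta$ on the two halves. The naive term $\tilde M_L\tilde M_R$ already yields error $O(\delta)$; the CL framework supplies a correction that re-expresses $M_L M_R-\tilde M_L\tilde M_R$ as a signed expectation over a structured pair of regular sub-ROBPs, which can itself be estimated by further calls to $\mathsf{Approx}_\delta$ plus $\tilde O(\log(nw))$ overhead. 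Crucially, the doubly stochastic structure forces the residual error to be quadratic in $\delta$ rather than linear, yielding the recurrence
$$S(2m,\delta^2)\le S(m,\delta)+\tilde O(\log(nw)).$$

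With this recurrence in hand, the induction is routine. Choose a base length $m_0$ and a constant base error $\delta_0$ so that $m_0\cdot 2^{O(\log\log(1/\eps))}=n$; the base case is handled by enumerating the seeds of a standard PRG for regular ROBPs (e.g., the INW generator with parameters from~\cite{BRRY14}) in space $\tilde O(\log(nw))$. Unrolling the recurrence across $O(\log\log(1/\eps))$ levels gives total space $\tilde O(\log(nw)\cdot\log\log(1/\eps))$, and the final estimate of $\ex[x]{B(x)}$ is read off as the top-level bilinear form against $e_{v_0}$ and $\mathbf{1}_{V_{\mathrm{acc}}}$.

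The main technical obstacle is proving the quadratic error reduction used in the second paragraph --- that one level of correction really takes error $\delta$ down to $\delta^2$ rather than merely a constant-factor improvement. This is where regularity is essential: because each $M_i$ fixes the uniform distribution, every error matrix $M-\tilde M$ acts only on the subspace orthogonal to the all-ones vector, exactly the regime in which INW-type generators for regular ROBPs have provably small error (matching the analysis of~\cite{BRRY14}). To obtain the strengthening to singular-value approximation promised in the abstract, we will state the induction hypothesis in terms of two-sided spectral/singular-value closeness rather than only the one-sided bilinear-form closeness needed for the basic theorem; this requires some bookkeeping but, being a purely analytic strengthening of the hypothesis, should not change the space bound.
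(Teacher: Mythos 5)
The high-level ingredients you name---divide-and-conquer on the matrix product $M_1\cdots M_n$, regularity via doubly-stochastic matrices, a two-sided singular-value notion of closeness, and a derandomized-squaring/BRRY base case---do match the paper. But the mechanism you propose for reaching $O(\log\log(1/\eps))$ recursion depth does not work and is not what the paper (or \cite{CHLTW23}) does. The crux is your recurrence $S(2m,\delta^2)\le S(m,\delta)+\tilde O(\log(nw))$, which claims that combining $\delta$-accurate approximations $\tilde M_L,\tilde M_R$ of the two length-$m$ halves, plus $\tilde O(\log(nw))$ overhead, yields a $\delta^2$-accurate approximation of $M_LM_R$. Your own decomposition $M_LM_R=\tilde M_L\tilde M_R+(M_L-\tilde M_L)M_R+\tilde M_L(M_R-\tilde M_R)$ exhibits the obstruction: both correction terms involve the \emph{true} matrices $M_L,M_R$ (equivalently, the unobserved errors $M_L-\tilde M_L$, $M_R-\tilde M_R$), so they cannot be ``estimated by further calls to $\mathsf{Approx}_\delta$.'' The Chattopadhyay--Liao recursion does not estimate this residual. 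Instead it defines, for each dyadic interval, a whole hierarchy $\lvl[0]{\bM_{\ell..r}},\lvl[1]{\bM_{\ell..r}},\dots$ indexed by a Richardson level $k$, and the error of $\lvl[k]{\bM_{\ell..r}}$ is roughly $\gamma^{k+1}$ \emph{uniformly over interval lengths}: with $k$ fixed, climbing one level of the interval tree multiplies the error only by $(1+1/\log n)$---it does not square it. A recursion that halves the interval while squaring the error therefore has no basis in the framework, and the quadratic step your whole argument leans on is never proved.

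The $O(\log\log(1/\eps))$ depth in the paper's proof comes from a different source: a restructured recursion on the Richardson level $k$ (\Cref{lemma:new-recursion}), which rewrites $\lvl[k]{\bM_{\ell..r}}$ as a sum of $O(n)$ terms, each a product of matrices at level at most $\lfloor k/2\rfloor$. Choosing $h=\lceil k/2\rceil$ makes this recursion bottom out after $O(\log k)$ steps; with $\gamma=1/\log n$ one takes $k=O(\log(1/\eps)/\log\log n)$, so the depth is $O(\log\log(1/\eps))$ and each level costs $\tilde O(\log(nw))$ space via iterated matrix multiplication. Put differently, the ``quadratic error reduction'' intuition does hold, but for \emph{doubling $k$} rather than for halving the interval length, and realizing it space-efficiently requires the nontrivial algebraic identity of \Cref{lemma:new-recursion}---which is entirely absent from your sketch. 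Your closing remark about lifting the induction hypothesis to SV-approximation is fine as bookkeeping, but it sits downstream of the broken recurrence, so as written the proof does not go through.
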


\subsection{Error reduction for regular branching programs}
Given the better PRG by \cite{BRRY14} in the regular setting, it is natural to ask whether one can get a better WPRG than \cite{Hoza21} in the regular setting too. This is in fact plausible because most of the WPRG constructions\footnote{This includes \cite{CDRST21,PV21,Hoza21}, and implicitly \cite{CL20} as we shall see in this paper.} introduced in \Cref{subsec:WPRG} can be viewed as a black-box \emph{error reduction procedures}: given any $\eps_0$-PRG for ROBPs for some ``mild error" $\eps_0$ (which we call the ``base PRG"), one can construct a $\eps$-WPRG for ROBPs with better dependence on $\eps$. For general standard-order ROBPs, the $O(\log(n)\log(nw)+\log(1/\eps))$ seed length described in \Cref{sec:WPRG} was obtained by taking Nisan's PRG as the base PRG. Therefore, it is natural to think that one can obtain a better $\eps$-WPRG for regular ROBPs by taking the PRG in \cite{BRRY14} as the base PRG instead.

However, it turns out that the intuition is not trivially true, because every known error reduction procedure for general ROBPs requires the ``base error" $\eps_0$ to be at most $<1/n$. When $\eps_0<1/n$, the $\tilde{O}(\log(n)\log(w/\eps_0))$ seed length bound in \cite{BRRY14} is no better than Nisan's $O(\log(n)\log(nw/\eps_0))$ seed length, so we cannot hope to get any improvement in the seed length of the corresponding WPRG.

This problem was recently solved by Chen, Hoza, Lyu, Tal and Wu \cite{CHLTW23}. They showed how to exploit the regular property and obtain a reduction from $\eps$-WPRG for regular ROBPs to PRG for regular ROBPs with error $\eps_0=O(1/\log^2(n))$. As a result, they proved the following theorem. 
\begin{theorem}[\cite{CHLTW23}]\label{thm:WPRG}
There is an explicit $\eps$-WPRG for regular ROBPs with seed length $$\tilde{O}\left(\log(n)\left(\log(w)+\sqrt{\log(1/\eps)}\right)+\log(1/\eps)\right).$$ 
\end{theorem}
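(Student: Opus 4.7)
The plan is to apply the binary recursive formula of Chattopadhyay--Liao~\cite{CL20} as a black-box error reduction procedure, using the Braverman--Rao--Raz--Yehudayoff~\cite{BRRY14} PRG as the base generator for regular ROBPs. The key new observation is that in the regular setting this recursion only requires the base PRG to fool the programs with mild error $\varepsilon_0 = 1/\mathrm{polylog}(n)$, rather than the $\varepsilon_0 < 1/n$ that was needed in \cite{CL20} for general ROBPs. Since the \cite{BRRY14} seed length depends on $\log(w/\varepsilon_0)$ rather than $\log(nw/\varepsilon_0)$, this weaker base-error requirement is exactly what lets us beat Nisan's bound.

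Concretely, I would first fix the base PRG $G_0 \colon \bits{s_0} \to \bits{n}$ to be the \cite{BRRY14} generator with error $\varepsilon_0 = 1/\mathrm{polylog}(n)$, giving $s_0 = \tilde O(\log n \cdot \log w)$. Next I would unfold the binary recursive formula of \cite{CL20}: letting $B_{[a,b]}$ denote the sub-program on layers $a,\ldots,b$, one expands $\mathbb E[B_{[1,n]}]$ as a dyadic telescoping sum of products of expectations of sub-programs, with each factor estimated using an independent evaluation of $G_0$. Packaging this as a WPRG $(G,\rho)$, the output string is a concatenation of $k$ independent base-PRG seeds together with $O(\log(1/\varepsilon))$ bits used to index the terms of the telescoping sum and to encode their signed weights.

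The heart of the argument is an inductive analysis of the error over the recursion depth $k$. In \cite{CL20}, this induction loses a factor of $n$ per level -- arising from a union bound over the sub-intervals on which $G_0$ must simultaneously fool the program -- which is what forces $\varepsilon_0 < 1/n$. In the regular setting, I would replace this union bound by the small-weight bound of \cite{BRRY14}, which implies that the total error incurred by $G_0$ on the collection of sub-programs at any fixed level is only $\mathrm{polylog}(n) \cdot \varepsilon_0$. With $\varepsilon_0 = 1/\mathrm{polylog}(n)$ this is a constant per level, so the error after $k$ levels shrinks like $2^{-\Omega(k^2)}$. Setting $k = \Theta(\sqrt{\log(1/\varepsilon)})$ to match the target error then yields the claimed seed length $k \cdot s_0 + O(\log(1/\varepsilon)) = \tilde O(\log(n)(\log(w) + \sqrt{\log(1/\varepsilon)})) + O(\log(1/\varepsilon))$.

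The main obstacle is carrying out the ``sum-of-weights in place of union bound'' step: I need to verify that the sub-programs produced at every level of the recursion still admit an appropriate small-weight bound, even though a sub-program of a regular ROBP with a fixed start state is not itself regular in general. I expect to handle this by averaging over start states, so that the measure on which the error is tracked at each intermediate layer coincides with the stationary distribution that is preserved by regular transitions. This weighted-error inductive argument is the regular analogue of the general-ROBP analysis of \cite{CL20}, and it is what enables the recursion to run with the much weaker base PRG.
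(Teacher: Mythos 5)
Your high-level plan is the paper's approach: apply the binary recursion of \cite{CL20} with the \cite{BRRY14} PRG as the base, replace the per-level union bound by the weight argument so that a mild base error suffices, and set the Richardson level $k$ proportional to $\sqrt{\log(1/\eps)}$. But several of the steps you sketch don't actually go through, and they are the steps that carry the theorem.

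First, the claimed error-reduction rate is wrong. In this framework, increasing the Richardson level $k$ by one multiplies the error by roughly the base error $\gamma$, so after $k$ levels the error is about $\gamma^{k+1}$ (more precisely $\lvl[k]{\eps}=\frac{\gamma^{k+1}}{10(k+1)^2\log n}$, propagated through the dyadic tree with only a constant-factor loss because the per-split blowup is $(1+1/\log n)$). There is no mechanism by which ``a constant per level'' gives $2^{-\Omega(k^2)}$; that statement conflates the $\log n$ dyadic splits with the $k$ Richardson levels. The correct way to land on $k=\Theta(\sqrt{\log(1/\eps)})$ is to \emph{optimize over $\gamma$}: taking $\gamma=2^{-\sqrt{\log(1/\eps)}}$ makes $\gamma^{k+1}\le\eps$ with $k=\Theta(\sqrt{\log(1/\eps)})$, while paying $\tilde O(\log n\cdot\log(1/\gamma))=\tilde O(\log n\sqrt{\log(1/\eps)})$ in the base seed. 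Fixing $\eps_0=1/\mathrm{polylog}(n)$ outright, as you propose, forces $k=\Theta(\log(1/\eps)/\log\log n)$, which is far too large.

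Second, your seed-length accounting is incorrect in a way that would block the result. Each term of the expansion of $\lvl[k]{\bM_{0..n}}$ is a product of up to $h=O(k\log n)$ base-PRG matrices, not $k$ of them, so ``concatenating $k$ independent base-PRG seeds'' does not even produce the right object. And if you used fresh seeds for every factor you would pay $h\cdot d_0$, which is far worse than needed. The essential (and missing) ingredient is the INW-generator derandomization of these $h$ seeds (Lemma~\ref{lemma:INW}): that step reduces the cost from $h\cdot d_0$ to $d_0 + O(\log h\cdot\log(hw/\eps_{\mathrm{INW}}))$, plus $O(k\log n)$ bits to index terms. Relatedly, the identity $k\cdot s_0 + O(\log(1/\eps)) = \tilde O(\log n(\log w + \sqrt{\log(1/\eps)})) + O(\log(1/\eps))$ is simply false arithmetic: $\sqrt{\log(1/\eps)}\cdot\log n\cdot\log w$ is not $\log n\cdot(\log w + \sqrt{\log(1/\eps)})$.

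Third, the obstacle you flag at the end is not a real one. Any sub-program of a regular ROBP (restricting to layers $\ell,\dots,r$) is itself regular, since each layer transition is unchanged, so no ``averaging over start states'' is needed. The paper instead keys the induction on a \emph{weight-relative} approximation notion, $\norm[\infty]{(\widetilde{\bM_{\ell..r}}-\bM_{\ell..r})y}\le\eps_0\,W(\ell,r,y)/W^*$, and uses the additive decomposition $W(\ell,r,y)=W(\ell,m,\bM_{m..r}y)+W(m,r,y)$ together with the telescoping identity $\lvl[k]{\bDel_{\ell..r}}=\sum_{i+j\in\{k-1,k\}}\lvl[i]{\bDel_{\ell..m}}\lvl[j]{\bDel_{m..r}}(-1)^{\cdot}+\lvl[k]{\bDel_{\ell..m}}\bM_{m..r}+\bM_{\ell..m}\lvl[k]{\bDel_{m..r}}$ to make the two ``linear'' error terms combine into the parent weight while the ``quadratic'' terms stay subdominant. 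That identity and the weight-relative induction are the actual engine of the proof; your proposal gestures at a ``sum-of-weights in place of union bound'' step but does not supply either.
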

Following \cite{AKMPSV20, CDRST21,PV21,Hoza21}, the WPRG construction in \cite{CHLTW23} is based on the ``inverse Laplacian" perspective of small-space derandomization and Richardson iteration. The key step in their construction is to modify the approximated inverse Laplacian based on a structure called ``shortcut graph". With the shortcut graph structure, they showed how to apply the potential argument in \cite{BRRY14} to get a better bound for $\eps$ that is still non-trivial even when $\eps_0=O(1/\log^2(n))$. Based on the same idea, \cite{CHLTW23} also showed how to get a simplified proof of the non-black-box derandomization result in \cite{AKMPSV20} (\Cref{thm:white-box}).

In short, the main purpose of using the shortcut graph idea in \cite{CHLTW23} is to embed a ``binary-recursive-like" structure into the inverse Laplacian analysis. Such a structure makes their analysis compatible with the potential argument in \cite{BRRY14}. In order to prove the non-black-box derandomization result in \Cref{thm:white-box}, \cite{CHLTW23} showed that one can apply a different potential argument based on the notion of ``singular-value approximation" (SV approximation) defined in \cite{APPSV23}. 

\subsection{Our contribution}
While the shortcut graph modification gives a nice structure to the inverse Laplacian analysis, the inverse Laplacian perspective itself is sometimes tricky to work with. In fact, although the proof of \Cref{thm:white-box} in \cite{CHLTW23} is simpler than the original proof in \cite{AKMPSV20}, they still need to work on a sophisticated matrix seminorm, and the corresponding potential argument requires  non-trivial ideas to analyze. 

In this work, we give an alternative error reduction framework for regular ROBPs by modifying a WPRG construction by Chattopadhyay and Liao~\cite{CL20}. The advantage of using \cite{CL20} is that their WPRG construction is \emph{actually binary recursive}, and hence is naturally compatible with the weight argument in \cite{BRRY14}. To construct a WPRG for regular branching program that matches the parameter in \Cref{thm:WPRG}, we show that the analysis in \cite{CL20} can be improved in the regular setting based on the weight argument in \cite{BRRY14}. Inspired by the proof of \Cref{thm:white-box} in \cite{CHLTW23}, we also give an alternative proof of \Cref{thm:white-box} based on the notion of SV approximation. Because of the binary recursive nature of \cite{CL20}, both proofs are relatively straightforward by induction and are arguably simpler than the proofs in \cite{CHLTW23}. 

In fact, our proof of \Cref{thm:white-box} implies a slightly stronger claim (\Cref{thm:SV}) which might be of independent interest: we can compute an $\eps$-SV approximation of the random walk matrix of any regular ROBP of width $w$ and length $n$ in space $\tilde{O}(\log(nw)\log\log(1/\eps))$. (See \cite{APPSV23} for comparison between SV approximation and other notions of approximation.) It is not clear how to obtain this stronger claim  from the previous proofs of \Cref{thm:white-box}~\cite{AKMPSV20, CHLTW23}. 

Finally, we show in \Cref{sec:equivalence} that the Laplacian-based construction in \cite{CHLTW23} is actually equivalent to the binary recursive construction in \cite{CL20} that we use in this paper. We note that our proofs of \Cref{thm:white-box} and \Cref{thm:WPRG} are self-contained and do not rely on this fact.  

\begin{remark}
There are two additional results in \cite{CHLTW23} which are based on their proof of \Cref{thm:white-box} and \Cref{thm:WPRG}: WPRGs for width-$3$ ROBPs and WPRGs for unbounded-width permutation ROBPs, both having seed length $\tilde{O}(\log(n)\sqrt{\log(1/\eps)}+\log(1/\eps))$. Our new proofs for \Cref{thm:white-box} and \Cref{thm:WPRG} can also be plugged into the corresponding parts of their proofs to get the same results. 
\end{remark}

\subsection{Organization}
In \Cref{sec:prelim} we introduce some general definitions that are used in both the proofs of \Cref{thm:WPRG} and \Cref{thm:white-box}, and give a brief overview of our proofs. In \Cref{sec:WPRG} we formally prove \Cref{thm:WPRG}. In \Cref{sec:white-box} we prove \Cref{thm:white-box}. 

\section{General Setup and Proof Overview}\label{sec:prelim}
\paragraph{Notation.} For $n\in\bbN$, denote $[n]=\{1,2,\ldots,n\}$. We write matrices in boldface and use $\bM[i,j]$ to denote the entry of matrix $\bM$ on the $i$-th row and the $j$-th column. We use $\bI_w$ to denote the $w\times w$ identity matrix. For a column vector $x$, we denote the $i$-th entry of $x$ by $x[i]$. For every matrix $\bM\in\bbR^{w\times w}$, $\norm[\infty]{\bM}$ denotes the infinity norm $\sup_{\norm[\infty]{v}=1} \norm[\infty]{\bM v}$ and $\norm{\bM}$ denotes the $2$-norm $\sup_{\norm{v}=1} \norm{\bM v}$. For any alphabet $\Sigma$ and string $x\in\Sigma^*$, we use $\abs{x}$ to denote the length of $x$, $x_{[i]}$ to denote the $i$-th symbol of $x$ and $x_{[\le i]}$ to denote the prefix of $x$ of length $x$. For any two strings $x,y$, we use $x\circ y$ to denote the concatenation of $x$ and $y$.

\subsection{ROBPs and matrices}
For the rest of this paper, we consider a fixed regular ROBP $B$ of length $n$ and width $w$ specified by transition functions $B_1,\ldots,B_n$. For every $i\in[n]$, and every $b\in\bits{}$, define the matrix $\bM_i(b)\in\bbR^{w\times w}$ as 
$$\forall u,v\in[w], \bM_i(b)[u,v]:=\begin{cases}1\textrm{ if }B_i(u,b)=v,\\0\textrm{ otherwise.}\end{cases}$$
We refer to $\bM_i(b)$ as the \emph{transition matrix of $B_i$ on $b$}. In addition, for every $0\le \ell < r \le n$ and a string $s\in\bits{r-\ell}$, we denote the transition matrix from layer $\ell$ to layer $r$ on input $x$ as  
$$\bM_{\ell..r}(s):=\prod_{i=1}^{r-\ell} \bM_{\ell+i}(s_i)$$
In this paper we frequently use the following fact: 
\begin{fact}
For every $\ell<m<r$ and $x\in\bits{m-\ell},y\in\bits{r-m}$, $\bM_{\ell..m}(x)\bM_{m..r}(y)=\bM_{\ell..r}(x\circ y)$.
\end{fact}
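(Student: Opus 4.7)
The plan is to unfold the definition of $\bM_{\ell..r}(x\circ y)$ and split the resulting product at the index corresponding to the boundary between $x$ and $y$. Concretely, let $L=m-\ell$ and $R=r-m$, so that $x\in\bits{L}$, $y\in\bits{R}$, and $x\circ y\in\bits{L+R}=\bits{r-\ell}$. The definition gives
\[
\bM_{\ell..r}(x\circ y) \;=\; \prod_{i=1}^{r-\ell} \bM_{\ell+i}\bigl((x\circ y)_{[i]}\bigr).
\]
I would then observe that $(x\circ y)_{[i]}=x_{[i]}$ for $1\le i\le L$ and $(x\circ y)_{[i]}=y_{[i-L]}$ for $L+1\le i\le L+R$, and split the product at $i=L$ into a ``prefix'' product and a ``suffix'' product.

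The prefix product is, by definition, exactly $\bM_{\ell..m}(x)=\prod_{i=1}^{L}\bM_{\ell+i}(x_{[i]})$. For the suffix product, I would perform the change of index $j=i-L$, so that $\ell+i=m+j$ and the suffix becomes $\prod_{j=1}^{R}\bM_{m+j}(y_{[j]})$, which by definition is $\bM_{m..r}(y)$. Since the product defining $\bM_{\ell..r}(x\circ y)$ respects the order of the factors, this splitting preserves the matrix product, yielding $\bM_{\ell..r}(x\circ y)=\bM_{\ell..m}(x)\cdot\bM_{m..r}(y)$.

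There is essentially no obstacle here beyond bookkeeping with indices: the identity is an immediate consequence of associativity of matrix multiplication together with the definition of $\bM_{\ell..r}$ as an ordered product of single-step transition matrices. (One could alternatively give a one-line combinatorial proof by noting that $\bM_{\ell..r}(s)[u,v]$ is the indicator that the walk from state $u$ in layer $\ell$ on input $s$ reaches $v$ in layer $r$, and that such a walk factors uniquely through its state in layer $m$; but unfolding the product seems cleaner for such a short fact.)
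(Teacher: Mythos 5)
Your proof is correct, and it is exactly the routine unfolding-and-splitting argument that the paper implicitly relies on (the paper states this as a \emph{fact} without proof, treating it as immediate from the definition $\bM_{\ell..r}(s)=\prod_{i=1}^{r-\ell}\bM_{\ell+i}(s_{[i]})$). The index bookkeeping and the reindexing $j=i-L$ are handled cleanly, so there is nothing to add.
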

\noindent
In addition, observe that for a start state $v_0\in[w]$ and a set of accept state $V_{\mathrm{acc}}\subseteq[w]$, $B(s)=1$ if and only if there exists $v_n\in V_{\mathrm{acc}}$ s.t. $\bM_{0..n}(s)[v_0,v_n]=1$. 

Given the definitions above, we further define $\bM_i:=\frac{1}{2}(\bM_i(0)+\bM_i(1))$ which we call the \emph{random walk matrix} of $B_i$, and define $\bM_{\ell..r}:=\prod_{i=\ell+1}^r \bM_i$ which is the random walk matrix from layer $\ell$ to layer $r$. Note that $\norm[\infty]{\bM_{\ell..r}}\le 1$ because $\bM_{\ell..r}$ is right-stochastic,\footnote{This still holds even when $B$ is not regular.} and we also have $\norm{\bM_{\ell..r}}\le 1$ because $\bM_{\ell..r}$ is doubly-stochastic by the regularity. 

Finally, we define $\vst$ to be the ``start vector" s.t. $\vst[v_0]=1$ and $\vst[i]=0$ for every $i\neq v_0$, and $\ved$ to be the ``accept vector" s.t. $\ved[i]=1$ if $i\in V_{\mathrm{acc}}$ and $\ved[i]=0$ otherwise. Then observe that
$$B(s)=\vst^\top \bM_{0..n}(s)\ved$$ and $$\ex[s\in\bits{n}]{B(s)}=\vst^\top \bM_{0..n} \ved.$$
Given these facts, our goal is to find a ``good approximation" of $\bM_{0..n}$, denoted by $\widetilde{\bM_{0..n}}$, s.t. $$\abs{\vst^\top \bM_{0..n} \ved-\vst^\top \widetilde{\bM_{0..n}} \ved }\le \eps.$$ For \Cref{thm:WPRG} we want $\widetilde{\bM_{0..n}}$ to correspond to the output of a WPRG with short seed length, while for \Cref{thm:white-box} we want to make sure that $\widetilde{\bM_{0..n}}$ can be implemented in $\widetilde{O}(\log(nw)\log\log(1/\eps))$ space. Because of the different goals, the notions of approximation would also be different in the proofs of \Cref{thm:WPRG} and \Cref{thm:white-box}.

\subsection{Recursion}
In this section, we introduce a recursive definition from \cite{CL20} which we use in both the proofs of \Cref{thm:WPRG} and \Cref{thm:white-box}. Without loss of generality, we assume that $n$ is a power of $2$ for the rest of this paper. For ease of notation, we define the set of pairs $$\BS_n=\{(\ell,r):\exists i,k\in\bbN\cup\{0\}\textrm{ s.t. }\ell=i\cdot 2^k, r = \ell + 2^k \textrm{ and } 0\le \ell< r \le n\}.\footnote{$\BS$ stands for ``binary splitting".}$$
Suppose for every $(\ell_0,r_0)\in\BS_n$, we have defined a matrix $\lvl[0]{\bM_{\ell_0..r_0}}$ that is a ``mild approximation" of $\bM_{\ell_0..r_0}$. Then consider the following recursive definition of matrices for every $(\ell,r)\in\BS_n$ and every $k\in\bbN$:
\begin{equation}\label{eq:recursion}
\lvl[k]{\bM_{\ell..r}}:=
\begin{cases}
\bM_r&\textrm{ if }r-\ell=1,\\
\sum_{i+j=k} \lvl[i]{\bM_{\ell..m}} \cdot \lvl[j]{\bM_{m..r}} - \sum_{i+j=k-1} \lvl[i]{\bM_{\ell..m}} \cdot \lvl[j]{\bM_{m..r}}&\textrm{ otherwise, where $m=(\ell+r)/2$.}
\end{cases}
\end{equation}
The WPRG construction in \cite{CL20} is exactly a derandomization of the matrix $\lvl[\log(1/\eps)]{\bM_{0..n}}$, where the base cases $\lvl[0]{\bM_{\ell_0..r_0}}$ are generated by Nisan's PRG with error $1/n$. In this paper, we also prove \Cref{thm:WPRG} and \Cref{thm:white-box} by showing that $\lvl[k]{\bM_{0..n}}$ is a good enough approximation of $\bM_{0..n}$ (with different choices of the parameter $k$ and base case matrices $\lvl[0]{\bM_{\ell_0..r_0}}$).

Now for every $i\ge 0$, define $\lvl[i]{\bDel_{\ell..r}}:=\lvl[i]{\bM_{\ell..r}}-\bM_{\ell..r}$. The correctness of both \cite{CL20} and our results relies on the following identity, which was used in the proof of \cite[Lemma~15]{CL20}.
\begin{lemma}\label{lemma:identity}
For every $(\ell,r)\in\BS_n$ s.t. $r-\ell>1$ and $m=(\ell+r)/2$, 
\begin{equation*}
\lvl[k]{\bDel_{\ell..r}}=\sum_{i+j=k} \lvl[i]{\bDel_{\ell..m}} \cdot \lvl[j]{\bDel_{m..r}} - \sum_{i+j=k-1} \lvl[i]{\bDel_{\ell..m}} \cdot \lvl[j]{\bDel_{m..r}}+\lvl[k]{\bDel_{\ell..m}}\bM_{m..r}+\bM_{\ell..m}\lvl[k]{\bDel_{m..r}}.
\end{equation*}
\end{lemma}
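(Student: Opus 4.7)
The plan is to prove the identity by a direct expansion of the recursive definition of $\lvl[k]{\bM_{\ell..r}}$ in \eqref{eq:recursion}, observing that the two sums $\sum_{i+j=k}$ and $\sum_{i+j=k-1}$ are arranged precisely so that the unwanted intermediate terms cancel. I treat $k\ge 1$ (the only case where the recursive rule applies; for $k=0$ the base case is given directly).

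First, I substitute $\lvl[i]{\bM_{\ell..m}} = \bM_{\ell..m} + \lvl[i]{\bDel_{\ell..m}}$ and $\lvl[j]{\bM_{m..r}} = \bM_{m..r} + \lvl[j]{\bDel_{m..r}}$ into the first sum of \eqref{eq:recursion} and expand each of the $k+1$ products into four pieces. Using $\bM_{\ell..m}\bM_{m..r}=\bM_{\ell..r}$, this yields
$$\sum_{i+j=k}\lvl[i]{\bM_{\ell..m}}\lvl[j]{\bM_{m..r}} = (k+1)\bM_{\ell..r} + \bM_{\ell..m}\sum_{j=0}^{k}\lvl[j]{\bDel_{m..r}} + \Big(\sum_{i=0}^{k}\lvl[i]{\bDel_{\ell..m}}\Big)\bM_{m..r} + \sum_{i+j=k}\lvl[i]{\bDel_{\ell..m}}\lvl[j]{\bDel_{m..r}}.$$
Expanding $\sum_{i+j=k-1}\lvl[i]{\bM_{\ell..m}}\lvl[j]{\bM_{m..r}}$ by the same recipe gives the analogous formula with every occurrence of $k$ replaced by $k-1$ (so there are now only $k$ pairs, etc.).

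Subtracting the second expansion from the first, the constant-type contributions combine into $(k+1)\bM_{\ell..r}-k\bM_{\ell..r}=\bM_{\ell..r}$. The two single-$\bDel$ sums telescope: $\bM_{\ell..m}\sum_{j=0}^{k}\lvl[j]{\bDel_{m..r}}-\bM_{\ell..m}\sum_{j=0}^{k-1}\lvl[j]{\bDel_{m..r}}=\bM_{\ell..m}\lvl[k]{\bDel_{m..r}}$, and symmetrically on the left. The two double-$\bDel$ sums survive unchanged and give the first two terms on the right-hand side of the claimed identity. Rearranging $\lvl[k]{\bM_{\ell..r}}-\bM_{\ell..r}=\lvl[k]{\bDel_{\ell..r}}$ finishes the proof.

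There is essentially no conceptual obstacle; the whole argument is careful bookkeeping of which cross terms cancel. The key structural point—and the reason the recursion \eqref{eq:recursion} contains the ``$-\sum_{i+j=k-1}$'' correction in the first place—is exactly this telescoping, which kills all single-$\bDel$ contributions except the top-level ones and leaves only ``quadratically small'' double-$\bDel$ remainders. This is precisely the form of error that will let the inductive arguments of later sections amplify a mild base-level error into a much smaller error after $O(\log(1/\eps))$ levels of recursion.
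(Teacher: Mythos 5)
Your proof is correct. The paper does not reprove Lemma~\ref{lemma:identity}; it simply cites the proof of Lemma 15 in \cite{CL20}. Your direct expansion---substituting $\lvl[i]{\bM} = \bM + \lvl[i]{\bDel}$ on both sides of each product, using $\bM_{\ell..m}\bM_{m..r}=\bM_{\ell..r}$, counting $(k+1)$ versus $k$ copies of $\bM_{\ell..r}$, and telescoping the single-$\bDel$ sums---is exactly the natural computation and, as far as I can tell, essentially what the cited source does. Your observation that the lemma is meaningful only for $k\ge 1$ (the regime where the recursion \eqref{eq:recursion} defines $\lvl[k]{\bM_{\ell..r}}$, since for $k=0$ the matrix is a given base case) is also consistent with how the paper uses the lemma: in the proof of Lemma~\ref{lemma:WPRG-main} the case $k=0$ is handled as a separate base case of the induction, not via the identity.
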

\noindent
We briefly sketch how the correctness in \cite{CL20} was proved based on the lemma above. Suppose the ``base PRG" has error $\eps_0$ so that $\norm[\infty]{\lvl[0]{\bDel_{\ell_0..r_0}}}\le \eps_0$. Then one can prove by induction that $\norm[\infty]{\lvl[k]{\bDel_{0..n}}}\le O(n\eps_0)^{k+1}$, i.e. $\lvl[k]{\bM_{0..n}}$ is a $O(n\eps_0)^{k+1}$-approximation of $\bM_{0..n}$, using the fact that $\norm[\infty]{\bM_{\ell..r}}\le 1$ for every $\ell< r$.

Now observe that the $O(n\eps_0)^{k+1}$ bound is only non-trivial when $\eps_0<1/n$. As discussed in the introduction, the seed length of \cite{BRRY14} is not better than Nisan's PRG in this parameter regime. Therefore, in the regular setting, even if we can take the base PRG to be the improved PRG in \cite{BRRY14}, we do not get a better WPRG directly. The main contribution of this work is to give an improved analysis of the error of $\lvl[k]{\bM_{0..n}}$ in the regular setting. 

\subsection{Proof overview}\label{subsec:overview}

Similar to \cite{CHLTW23}, the reason why we can get an improvement in the regular setting is because a regular ROBP has a bounded ``total amount of mixing", no matter how large $n$ is. Our goal is to inductively prove an approximation guarantee that the error of $\lvl[k]{\bM_{\ell..r}}$ is \emph{proportional to the amount of mixing} from layer $\ell$ to layer $r$. For the proof of WPRG construction (\Cref{thm:WPRG}), this statement is formalized based on the ``weight" defined in \cite{BRRY14}. For the proof of non-black-box derandomization (\Cref{thm:white-box}), this statement is formalized with SV approximation~\cite{APPSV23}. We defer the formal definitions to later sections, and focus on why this statement gives a better bound. 

The first observation is that the last two error terms in \Cref{lemma:identity} combine nicely. That is, by induction hypothesis we can show that the second last error term $\lvl[k]{\bDel_{\ell..m}}\bM_{m..r}$ is proportional to the amount of mixing from layer $\ell$ to layer $m$, and the last error term $\bM_{\ell..m}\lvl[k]{\bDel_{m..r}}$ is proportional to the amount of mixing from layer $m$ to layer $r$. Therefore, their sum is proportional to the total amount of mixing from layer $\ell$ to layer $r$. Furthermore, we observe that with a proper choice of parameters, the error terms in the first two summations ($\sum_{i+j=k} \lvl[i]{\bDel_{\ell..m}} \cdot \lvl[j]{\bDel_{m..r}}$  and $\sum_{i+j=k-1} \lvl[i]{\bDel_{\ell..m}} \cdot \lvl[j]{\bDel_{m..r}}$) are actually very small compared to the last two terms, and hence do not affect the total error too much. 

Specifically, suppose we already know that the magnitude of $\lvl[i]{\bDel_{\ell..m}},\lvl[i]{\bDel_{m..r}}$ is roughly bounded by $\lvl[i]{\eps}$ for every $i\in\bbN$, and we want to prove by induction that the magnitude of the new error matrix $\lvl[k]{\bDel_{\ell..r}}$ is also roughly bounded by $\lvl[k]{\eps}$. We properly choose $\lvl[i]{\eps}$ as in the following lemma, so that the error terms in the first two summations sum up to roughly $\lvl[k]{\eps}/\log(n)$, which is much smaller than the ``target error" $\lvl[k]{\eps}$, and hence does not affect the total error too much. We defer the proof of \Cref{lemma:error} to \Cref{appendix:error}.\footnote{One can also choose $\lvl[i]{\eps}=\gamma^{i+1}/((2K+1)\log(n))$ where $K$ is an upper bound for $k$. Then the proof of \Cref{lemma:error} becomes straightforward, and it turns out that this does not affect the final results.}  
\begin{lemma}\label{lemma:error}
Let $\gamma<1/2$, and define $\lvl[i]{\eps}=\frac{\gamma^{i+1}}{10\log(n)(i+1)^2}$. Then for every $k\in\bbN$ we have $$\sum_{i+j=k} \lvl[i]{\eps}\lvl[j]{\eps}+\sum_{i+j=k-1} \lvl[i]{\eps}\lvl[j]{\eps} \le \lvl[k]{\eps}/\log(n).$$
\end{lemma}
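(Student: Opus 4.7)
The plan is to reduce the inequality to a clean pointwise bound on the combinatorial sum
$$
S_k \;:=\; \sum_{\substack{i+j=k\\ i,j\ge 0}}\frac{1}{(i+1)^2(j+1)^2}.
$$
Factoring out everything that does not depend on $i,j$, one has $\epsilon^{(i)}\epsilon^{(j)} = \frac{\gamma^{i+j+2}}{100\log^2(n)(i+1)^2(j+1)^2}$, so the left-hand side of the lemma becomes $\frac{\gamma^{k+1}}{100\log^2(n)}\bigl(\gamma\,S_k + S_{k-1}\bigr)$ (with the convention $S_{-1}:=0$), while the target right-hand side is $\frac{\gamma^{k+1}}{10\log^2(n)(k+1)^2}$. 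After cancellation, it suffices to prove the purely combinatorial inequality $\gamma\,S_k + S_{k-1} \le \frac{10}{(k+1)^2}$.

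Next I would establish $S_k \le \frac{C}{(k+2)^2}$ for a small absolute constant $C$, via the partial-fraction identity $\frac{1}{(i+1)(j+1)} = \frac{1}{k+2}\bigl(\frac{1}{i+1}+\frac{1}{j+1}\bigr)$, which holds precisely when $i+j=k$. Squaring this identity and summing over $i+j=k$ collapses the double sum into harmonic sums: $S_k = \frac{1}{(k+2)^2}\bigl(2H^{(2)}_{k+1} + \tfrac{4H_{k+1}}{k+2}\bigr)$, where $H_m$ and $H^{(2)}_m$ are the standard harmonic partial sums of orders $1$ and $2$. Using $H^{(2)}_{k+1}\le \pi^2/6$ together with the elementary bound $2H_{k+1}\le k+2$ for $k\ge 0$ (checked by a short induction), this yields $S_k \le C/(k+2)^2$ with $C = \pi^2/3 + 2 < 6$. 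Plugging back and using $(k+1)^2\le (k+2)^2$ gives $\gamma\,S_k + S_{k-1} \le C(1+\gamma)/(k+1)^2$, and since $\gamma<1/2$ we get $C(1+\gamma) < 9 < 10$, closing the inequality. The edge case $k=0$ is immediate: $S_{-1}$ is empty and the bound reduces to $\gamma/10\le 1$.

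The main obstacle I anticipate is simply keeping the numerical constants tight enough to absorb the $(1+\gamma)$ factor into the target constant $10$. A naive approach to bounding $S_k$ — splitting the sum at the midpoint $i=k/2$ and invoking $\sum_i 1/(i+1)^2\le \pi^2/6$ on each half — produces a constant closer to $4\pi^2/3 \approx 13$, which is too large. The partial-fraction trick is precisely what pulls the constant down to roughly $6$, which is the slack needed for the inequality to close cleanly for every $\gamma<1/2$. No substantive analytic work is required beyond this; the rest is bookkeeping.
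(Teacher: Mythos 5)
Your proof is correct, and it takes a route closely related to but not identical with the paper's. Both arguments reduce the claim to bounding the convolution $S_k = \sum_{i+j=k} \frac{1}{(i+1)^2(j+1)^2}$, and both collapse $S_k$ via a partial-fraction identity of the form $\frac{1}{ab} = \frac{1/a + 1/b}{a+b}$ together with the tail bound $\sum_m 1/m^2 < \pi^2/6$. The difference is where the identity is applied. The paper applies it directly to the squares, taking $a = (i+1)^2$ and $b = (j+1)^2$, and then lower-bounds the resulting denominator by the convexity estimate $(i+1)^2 + (j+1)^2 \ge (k+2)^2/2$; this yields $S_k \le 4H^{(2)}_{k+1}/(k+2)^2$ and the final numerical bound $4(1+\gamma)H^{(2)}_{k+1} < \pi^2 < 10$. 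You instead apply the identity at the linear level ($a = i+1$, $b = j+1$), square it, and read off the exact closed form $S_k = (2H^{(2)}_{k+1} + 4H_{k+1}/(k+2))/(k+2)^2$, which gives a slightly sharper constant $\pi^2/3 + 2 \approx 5.3$ (versus the paper's implicit $2\pi^2/3 \approx 6.6$) at the price of the additional, easy bound $2H_{k+1} \le k+2$. Both constants absorb the factor $1+\gamma < 3/2$ with slack, so both routes close. Your side remark that a naive midpoint split of $S_k$ gives a constant near $4\pi^2/3$, which is too large to survive the extra factor of $1+\gamma$, is accurate and correctly pinpoints why some such algebraic reorganization is necessary rather than merely cosmetic.
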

With the choice of parameters above, we can prove that the error of the ``level-$k$ approximation" $\lvl[k]{\bM_{\ell..r}}$ only grows by a factor of $(1+1/\log(n))$ after each recursion. After $\log(n)$ levels of recursion, the error only grows by a constant factor. Therefore, we can choose the ``base-case error" $\lvl[0]{\eps}$ to be as small as $O(1/\log(n))$. This allows us to choose base cases with small seed length or space complexity. For the proof of \Cref{thm:WPRG}, we choose the base case to be the \cite{BRRY14} PRG with error $2^{-\sqrt{\log(1/\eps)}}$. For the proof of \Cref{thm:white-box} the base cases are generated using derandomized squaring~\cite{RV05,APPSV23}. 

\subsection{Small-space computation} 
Finally, before we start the formal proofs, we briefly discuss the model of space-bounded computation. We consider the standard model which is a Turing machine with a read-only input tape, a constant number of work tapes, and a write-only output tape. We say an algorithm runs in space $s$ if it uses at most $s$ cells on the \emph{work tapes} throughout the computation. Note that the input length and output length can be larger than $s$.

Next we recall some basic facts that we will use in space complexity analysis. For parallel composition of algorithms $\cA_1,\ldots,\cA_t$ we can reuse the work tape and get the following lemma.
\begin{lemma}\label{lemma:par-compose}
Let $\cA_1,\ldots,\cA_t$ be algorithms that on input $x$ run in space $s_1,\ldots,s_t$ respectively. Then there exists an algorithm $\cA$ that on input $x$ outputs $(\cA_1(x),\cA_2(x),\ldots,\cA_t(x))$ and runs in space $\max_{i\in[t]}(s_i)+O(\log(t))$.
\end{lemma}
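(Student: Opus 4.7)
The plan is to construct $\cA$ as a simple sequential loop that invokes each $\cA_i$ in turn, reusing the work tape between invocations. Specifically, $\cA$ maintains a counter $i$ on the work tape, initialized to $1$. For each value of $i$ from $1$ to $t$, it simulates $\cA_i$ on the input $x$ (read directly off of the input tape), appending the output of $\cA_i(x)$ to the output tape, and possibly writing a separator symbol between consecutive outputs so that the tuple can be parsed. Once $\cA_i$ halts, $\cA$ erases the portion of the work tape used by $\cA_i$, increments $i$, and proceeds to the next iteration. The loop terminates after $i$ reaches $t+1$.

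For the space analysis, I would argue as follows. The counter $i$ ranges over $\{1,\ldots,t+1\}$ and thus occupies $O(\log t)$ cells on the work tape, which remain allocated throughout the computation. During the simulation of $\cA_i$, the additional work tape usage is at most $s_i$, since $\cA_i$ by hypothesis runs in space $s_i$ on input $x$ (recall that the input length does not count against the space bound, as the input tape is read-only, and likewise for the output tape). Because the work tape used by $\cA_i$ is freed before starting $\cA_{i+1}$, the peak space usage at any single moment is at most $\max_{i\in[t]} s_i + O(\log t)$, giving the claimed bound.

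The only subtlety worth flagging is how $\cA$ knows when $\cA_i$ has produced its entire output so that it can clear the work tape and begin $\cA_{i+1}$; this is handled by the fact that $\cA_i$ is itself a halting space-$s_i$ Turing machine, so $\cA$ simply detects when $\cA_i$ enters its halt state. Writing a separator symbol (or, alternatively, agreeing on a fixed output length convention) ensures that the concatenated output on the output tape is unambiguously parseable as the tuple $(\cA_1(x),\ldots,\cA_t(x))$. There is no genuine obstacle here; this is a textbook sequential-composition argument, and I would present it briefly rather than in full detail.
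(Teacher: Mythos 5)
Your proof is correct and matches the paper's intent exactly: the paper gives this lemma without a detailed proof, remarking only that one "can reuse the work tape," which is precisely the sequential-loop-with-counter argument you spell out.
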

Furthermore, for sequential composition $\cA_1(\cA_2(x))$, while we cannot fully store $\cA_2(x)$ in the work tape, we can still simulate an input tape containing $\cA_2(x)$ by computing the mapping $(x,i)\to \cA_2(x)_{[i]}$ instead. (See, e.g., \cite[Lemma 4.15]{AB09}.) This implies the following lemma.
\begin{lemma}\label{lemma:sq-compose}
Let $\cA_2$ be an algorithm that runs in space $s_2$ on input $x$, and $\cA_1$ be an algorithm that runs in space $s_1$ on input $\cA_2(x)$. Then there exists an algorithm $\cA$ that on input $x$ outputs $\cA_1(\cA_2(x))$ in space $s_1+s_2+O(\log(s_1+s_2+|\cA_2(x)|))$.
\end{lemma}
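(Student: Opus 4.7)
The plan is the textbook ``recompute on demand'' simulation that underlies most space-composition lemmas. Concretely, on input $x$ the combined algorithm $\cA$ reserves three regions on its work tape: a region $R_1$ of $s_1$ cells holding the full internal configuration of a simulated run of $\cA_1$ (work-tape contents, work-head positions, and finite-control state); a region $R_2$ of $s_2$ cells used, when active, to simulate $\cA_2$ on input $x$; and a few counters/control bits, most importantly a pointer $i\in\{1,\dots,|\cA_2(x)|\}$ tracking the position of $\cA_1$'s input head and a pointer $j$ tracking how many output symbols $\cA_2$ has emitted so far. Both $i$ and $j$ fit in $O(\log|\cA_2(x)|)$ bits and all other control information fits in $O(\log(s_1+s_2))$ bits, giving a combined auxiliary overhead of $O(\log(s_1+s_2+|\cA_2(x)|))$.

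The first thing I would describe is how $\cA$ advances $\cA_1$ one step at a time inside $R_1$. Transitions of $\cA_1$ that do not touch its input tape are carried out locally: output-tape writes are forwarded to $\cA$'s own output tape (which does not count toward the space bound), and input-head movements update the counter $i$. When $\cA_1$ asks for the bit at position $i$ of its input, $\cA$ enters a subroutine that zeroes $R_2$, sets $j\gets 0$, and simulates $\cA_2$ on $x$ inside $R_2$. Every time the simulated $\cA_2$ writes an output symbol, the subroutine suppresses the write, increments $j$, and checks whether $j=i$; if so it copies the symbol into a constant-size register, aborts the simulation of $\cA_2$, reclaims $R_2$, and returns the bit to the outer simulation of $\cA_1$.

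Correctness is immediate because $\cA_2$ is deterministic: each invocation of the subroutine reproduces the same $i$-th symbol of $\cA_2(x)$, so the simulated $\cA_1$ sees exactly the input tape it would see when actually executed on $\cA_2(x)$, and hence $\cA$'s output tape receives precisely $\cA_1(\cA_2(x))$. For the space bound, $R_1$, the pointer $i$, and the control bits are always live, contributing $s_1+O(\log(s_1+s_2+|\cA_2(x)|))$, while $R_2$ together with $j$ is only live during a subroutine call and is erased before control returns, so at most one copy of each is on the tape simultaneously. The only delicate point is verifying that the head pointers are genuinely bounded by $|\cA_2(x)|$ (so that the logarithmic overhead actually suffices) and that suppressing $\cA_2$'s writes does not disturb its subsequent internal behavior; the latter is immediate because $\cA_2$'s output tape is write-only and never read back, and the former follows from the fact that $\cA_1$ only reads positions of its input tape, which has length $|\cA_2(x)|$. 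Apart from this routine check, no further obstacles arise.
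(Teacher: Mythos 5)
Your proof is correct and is exactly the standard ``recompute on demand'' simulation that the paper invokes by reference (Arora--Barak, Lemma~4.15) rather than spelling out. The one-sentence sketch the paper gives---simulate an input tape holding $\cA_2(x)$ by computing the mapping $(x,i)\mapsto \cA_2(x)_{[i]}$---is precisely what your proposal expands, and your accounting of the $s_1+s_2+O(\log(s_1+s_2+|\cA_2(x)|))$ space bound is right.
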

We also use the following lemma that can be found in \cite{MRSV17,AKMPSV20}.
\begin{lemma}\label{lemma:IMM}
Let $\bM_1,\ldots,\bM_t$ be $w\times w$ real matrices where each entry has bit length at most $T$. Then $\prod_{i=1}^t\bM_i$ can be computed in space $O(\log(t)\log(twT))$.  
\end{lemma}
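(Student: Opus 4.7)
The plan is a divide-and-conquer recursion on the range of matrices. Writing $m=\lceil t/2\rceil$, we have
$$\prod_{i=1}^{t}\bM_i \;=\; \Bigl(\prod_{i=1}^{m}\bM_i\Bigr)\cdot\Bigl(\prod_{i=m+1}^{t}\bM_i\Bigr),$$
and each entry $(a,b)$ of the full product equals $\sum_{c\in[w]}(\prod_{i=1}^{m}\bM_i)[a,c]\cdot(\prod_{i=m+1}^{t}\bM_i)[c,b]$. I would implement this as a procedure $\mathsf{Entry}(\ell,r,a,b)$ that emits the $(a,b)$-entry of $\prod_{i=\ell+1}^{r}\bM_i$ by calling itself on the two halves, looping over $c\in[w]$, and accumulating the resulting sum. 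The outer algorithm just loops over $(a,b)\in[w]^2$ and writes each $\mathsf{Entry}(0,t,a,b)$ to the output tape.

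The space analysis would then bound the per-level cost and multiply by the recursion depth $\log_2 t$. Each frame of the recursion needs the four range/index parameters (each $O(\log(tw))$ bits), a loop counter $c\in[w]$, and scratch space for the running sum together with the two sub-entries produced by the recursive children. Naively materializing full sub-entries between levels fails, because an entry of a product of $k$ matrices can already have bit length $\Theta(k(T+\log w))$, and carrying such values across all $\log t$ levels would blow the total space up to polynomial. To avoid this, each recursive call would stream its output bit-by-bit to its parent rather than return a fully materialized integer; the parent then consumes bits on the fly and never has to hold an entire large sub-entry in its frame. This is essentially the same trick that underlies \Cref{lemma:sq-compose} for sequential composition in small space, applied recursively.

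With the streaming convention in place, each recursion frame holds only the parameters, the loop index $c$, a current bit-position counter of width $O(\log(twT))$ (since entries of any sub-product fit into $O(t(T+\log w))$ bits), and an $O(\log(twT))$-bit window/carry register for performing schoolbook addition and multiplication on the streamed bits. This gives per-level cost $O(\log(twT))$, and combining it across the $\log t$ recursion levels via \Cref{lemma:sq-compose} yields total work-tape usage $O(\log t\cdot\log(twT))$. The outer loop over $(a,b)$ contributes only an additive $O(\log(tw))$, which is absorbed into the bound.

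The main obstacle will be justifying the $O(\log(twT))$ per-level cost rigorously under the streaming protocol. Multiplication of two multi-bit integers is not a purely low-to-high-bit operation — a single output bit depends on all lower-order operand bits via carry propagation — so the schoolbook routine has to carry an intermediate window whose width must be argued to remain $O(\log(twT))$ no matter how deep into the recursion we are. Verifying this quantitatively, and checking that the bookkeeping for \Cref{lemma:sq-compose} does not inflate the per-level cost above $\log(twT)$, is where most of the care goes; once that is established, the final bound follows by summing $O(\log(twT))$ across the $\log t$ recursion levels.
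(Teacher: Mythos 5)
The paper does not prove this lemma; it cites it from \cite{MRSV17,AKMPSV20}. Your divide-and-conquer plan (halve the range, recurse, compose via \Cref{lemma:sq-compose}, depth $\log t$, per-level cost $O(\log(twT))$) is the same route those references take, and the parameter bookkeeping you do — entries of a $2^k$-fold product have $O(2^k(T+\log w))$ bits, hence bit-position counters of width $O(\log(twT))$ suffice — is correct.

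The one place where your write-up would fail as stated is the ``streaming'' mechanism, and your own worry in the final paragraph is aimed at the wrong target. Carry propagation is not the issue: when multiplying two $N$-bit integers and summing $w$ such products, the column sums and carries at every bit position stay below $wN$, so an $O(\log(twT))$-bit carry register is ample. The real problem is operand access order. To produce bit $k$ of a product $a\cdot b$, the schoolbook routine needs \emph{simultaneous} access to $a_0,\dots,a_k$ and $b_0,\dots,b_k$; a one-pass low-to-high stream from the children with an $O(\log(twT))$-bit window cannot supply this without buffering $\Theta(k)$ bits, which defeats the purpose. What actually makes the bound work is not streaming but \emph{random access via recomputation}: the parent's bit-extraction routine runs in $O(\log(twT))$ space \emph{given an oracle for arbitrary bits} of the two child entries, and each oracle query is serviced by rerunning the relevant child from scratch and counting its output symbols until the requested bit appears. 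This is exactly what \Cref{lemma:sq-compose} formalizes. Once the composition is phrased this way, the space at depth $k$ satisfies $s(k)=s(k-1)+O(\log(twT))$, and after $\log t$ levels you get $O(\log t\cdot\log(twT))$. You already cite the right lemma — just replace the one-pass streaming picture with the query-and-recompute picture and the argument closes.
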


\section{WPRG for regular ROBPs}\label{sec:WPRG}
Using the matrix notation, the weight defined in \cite{BRRY14} can be written as follows.\footnote{The original results in \cite{BRRY14} only consider $y\in[0,1]^w$, but one can easily generalize them to $\bbR^w$ by shifting and scaling.}
\begin{definition}
For every vector $y\in\bbR^w$ and every $i\in[n]$, define the \emph{layer-$i$ weight on $y$} as
$$W(i,y):=\sum_{u\in[w]} \sum_{b\in \bits{}} \abs{(\bM_i y)[u]-y[B_i(u,b)]}.$$
For every $0\le \ell < r \le n$, the total weight between layer $\ell$ and $r$ on $y$ is defined as 
$$W(\ell,r,y):=\sum_{i=\ell+1}^{r} W(i, \bM_{i..r} y).\footnote{For the degenerate case $i=r$, let $\bM_{r..r}$ denote the identity matrix.}$$
\end{definition}
\begin{remark}
To interpret $W(\ell,r,y)$ with the original description in \cite{BRRY14}, consider the graph view of ROBPs, and consider $y$ to be the values on the nodes in layer $r$. Then for every $i\le r$, $\bM_{i..r} y$ corresponds to the values on layer $i$. Observe that each term in the definition of $W(i,\bM_{i..r} y)$ corresponds to the ``weight" on an edge between layer $i-1$ and $i$. In consequence, $W(\ell,r,y)$ corresponds to the total weight of the \emph{sub-program between layer $\ell$ and $r$} (i.e. the ROBP specified by transition functions $(B_{\ell+1},\ldots,B_r)$). 
\end{remark}
\noindent
The following identity is straightforward by definition: 
\begin{fact}\label{lemma:weight}
For every $0\le \ell <m<r\le n$ and every vector $y\in\bbR^w$, $W(\ell,r,y)=W(\ell,m,\bM_{m..r} y)+W(m,r,y)$. This also implies $ \max(W(\ell,m,\bM_{m..r} y),W(m,r,y))\le W(\ell,r,y)$.
\end{fact}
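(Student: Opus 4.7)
The plan is to prove the identity by a direct sum-splitting argument, then deduce the max bound from non-negativity of $W$. First, I would unfold the definition of $W(\ell,r,y)=\sum_{i=\ell+1}^{r} W(i,\bM_{i..r}y)$ and split the summation at $m$, writing it as $\sum_{i=\ell+1}^{m} W(i,\bM_{i..r}y) + \sum_{i=m+1}^{r} W(i,\bM_{i..r}y)$. The second sum is literally $W(m,r,y)$ by definition. For the first sum, since $\ell < i \le m < r$, the transition-matrix composition fact $\bM_{i..r} = \bM_{i..m}\bM_{m..r}$ applies, so each term becomes $W(i,\bM_{i..m}(\bM_{m..r}y))$; summing these from $i=\ell+1$ to $m$ is exactly $W(\ell,m,\bM_{m..r}y)$ by definition, applied with the ``target vector'' $\bM_{m..r}y$ in place of $y$.

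For the second claim, I would observe that every term in the definition of $W(i,\cdot)$ is an absolute value, hence non-negative; so both $W(\ell,m,\bM_{m..r}y)$ and $W(m,r,y)$ are non-negative, and the identity forces each of them to be at most $W(\ell,r,y)$. There is no real obstacle here: the only thing to verify carefully is that the indexing in the composition $\bM_{i..r}=\bM_{i..m}\bM_{m..r}$ matches what the definition expects (namely, that $\bM_{m..r}y$ plays the role of the ``layer-$m$ vector'' $y'$ when invoking $W(\ell,m,y')$), which is immediate from the fact stated earlier that transition matrices compose consistently across layer boundaries.
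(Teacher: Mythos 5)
Your proof is correct and is precisely the unpacking the paper has in mind when it labels this a "Fact" that is "straightforward by definition": split the defining sum at $m$, use $\bM_{i..r}=\bM_{i..m}\bM_{m..r}$ for $i\le m$ to recognize the first half as $W(\ell,m,\bM_{m..r}y)$, and note that the second half is $W(m,r,y)$ verbatim. The max inequality then follows from non-negativity of each $W(i,\cdot)$ term, exactly as you say.
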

\noindent
Given the definition of weight, the main results in \cite{BRRY14} imply the following lemmas. Note that \Cref{lemma:weight-bound} is the only place where regularity is required in this section.\footnote{To get \Cref{lemma:BRRY}, we use the fact that $\bM_{\ell..r}(\cdot)$ corresponds to the transition matrices of a ROBP of length $(r-\ell)$ and width $w$, and one can extend it to length $n$ by adding more identity transitions which do not affect the total weight.}
\begin{lemma}\label{lemma:weight-bound}
For every $\ell<r$ and every vector $y\in\bbR^w$, $W(\ell,r,y)\le w^2 \norm[\infty]{y}$.
\end{lemma}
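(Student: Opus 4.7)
The plan is to reduce the general case to the weight bound in \cite{BRRY14} for $y \in [0,1]^w$, by means of two elementary invariances of the seminorm $W(\ell,r,\cdot)$. First, $W$ is \emph{translation-invariant}: adding a constant vector $c\mathbf{1}$ to $y$ leaves each summand $(\bM_i y)[u] - y[B_i(u,b)]$ unchanged (because $\bM_i$ is row-stochastic, so $(\bM_i c\mathbf{1})[u] = c = (c\mathbf{1})[B_i(u,b)]$), hence $W(\ell,r, y + c\mathbf{1}) = W(\ell,r,y)$. Second, $W$ is \emph{absolutely homogeneous}, $W(\ell,r,\alpha y) = |\alpha|\,W(\ell,r,y)$, by linearity of the expression inside the absolute value.

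Using these, I would shift $y$ by $-\min_v y[v]$ and rescale to bring it into $[0,1]^w$, losing at most a factor of $2$ in $\|y\|_\infty$, and then invoke the \cite{BRRY14} bound on the sub-program of $B$ between layers $\ell$ and $r$ (which is itself a regular ROBP of length $r - \ell$ and width $w$; as noted in the footnote of the excerpt, one may alternatively pad with identity transitions without changing the weight). The constant factor from the shift can be absorbed by a symmetrized version of the reduction (applying the argument to both $y - (\min_v y[v])\mathbf{1}$ and $(\max_v y[v])\mathbf{1} - y$ and combining), or directly into the constant in the \cite{BRRY14} bound, to obtain $W(\ell,r,y) \le w^2 \|y\|_\infty$ as stated.

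The hardest part is the \cite{BRRY14} bound itself for $y \in [0,1]^w$. I would sketch its proof as follows: since $W$ is a seminorm, the level-set decomposition $y = \int_0^1 \mathbf{1}_{\{v : y[v] \ge t\}}\, dt$ reduces the task to the indicator case $y = \mathbf{1}_S$. Regularity makes $\bM_{i..r}$ doubly stochastic, so each $y^{(i)} := \bM_{i..r}\mathbf{1}_S$ has entries in $[0,1]$, and the concave ``variance'' potential $\Psi_i := \sum_u y^{(i)}[u](1 - y^{(i)}[u]) \in [0, w/4]$ satisfies $\Psi_{i-1} - \Psi_i = \frac{1}{4}\sum_u \bigl(y^{(i)}[B_i(u,0)] - y^{(i)}[B_i(u,1)]\bigr)^2$ by a short expansion. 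Telescoping then yields a global $\ell_2$ budget of $w$ on the sum of squared edge-differences. Converting this $\ell_2$ control into the $\ell_1$ quantity $W$ without incurring a $\sqrt{n}$ factor from a naive Cauchy--Schwarz is the main technical obstacle; the \cite{BRRY14} analysis resolves it by a careful combination with per-layer trivial bounds of the form $W(i, y^{(i)}) \le 2\min(|S|, w - |S|)$, and I would follow their argument rather than re-inventing it at this step.
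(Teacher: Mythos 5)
Your reduction to the $[0,1]^w$ case via translation invariance and absolute homogeneity of $W(\ell,r,\cdot)$ is precisely the route the paper takes: it cites \cite{BRRY14} for the $[0,1]^w$ bound and notes in a footnote that the extension to $\bbR^w$ follows "by shifting and scaling." So up to that point you and the paper agree, and the verification that $W$ is a translation-invariant, absolutely homogeneous seminorm is correct.

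Two small remarks. The ``symmetrized'' attempt to recover the factor of $2$ from the shift is a non-fix as stated: $y - (\min_v y[v])\mathbf{1}$ and $(\max_v y[v])\mathbf{1} - y$ have identical $W$-value (by translation invariance and $W(\ell,r,-\cdot)=W(\ell,r,\cdot)$) and identical $\norm[\infty]{\cdot}$, namely $\max_v y[v]-\min_v y[v]$, so combining them yields nothing new. The clean route is the sub-additivity you already invoke implicitly in calling $W$ a seminorm: write $y=y^+-y^-$ with $y^\pm\in[0,\norm[\infty]{y}]^w$ and use $W(\ell,r,y)\le W(\ell,r,y^+)+W(\ell,r,y^-)$, which costs a factor $2$ against the $[0,1]^w$ constant from \cite{BRRY14}. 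Whether that lands exactly at $w^2$ depends on the constant in \cite{BRRY14}, but the distinction is immaterial downstream, since $W^*$ enters the seed length only through $\log(wW^*)$. Finally, your sketch of the $[0,1]^w$ bound via the variance potential $\Psi_i=\sum_u y^{(i)}[u](1-y^{(i)}[u])$ runs into the $\ell_2$-to-$\ell_1$ obstacle that you yourself flag: telescoping controls $\sum_{i,u}\Delta_{i,u}^2$, not $W$, and Cauchy--Schwarz would cost a $\sqrt{n}$. The argument in \cite{BRRY14} does not go through this $\ell_2$ potential at all; it works directly with a pairwise $\ell_1$-type potential so that the per-layer weight is absorbed without any norm conversion. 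Since both you and the paper ultimately cite \cite{BRRY14} for this step rather than reproving it, this does not affect correctness, but the sketch you give is not a faithful summary of their proof.
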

\begin{lemma}\label{lemma:BRRY}
For every $\delta>0$, there exists an explicit PRG $G_0:\bits{d_0}\to\bits{n}$ s.t. for every $0\le \ell<r\le n$, $$\norm[\infty]{\left(\ex[s\sim\bits{d_0}]{\bM_{\ell..r}(G_0(s)_{[\le r-\ell]})}-\bM_{\ell..r}\right)y} \le \delta W(\ell,r,y).$$
In addition, the seed length is $d_0=O\left(\log(n)\left(\log\log(n)+\log(w/\delta)\right)\right)$.
\end{lemma}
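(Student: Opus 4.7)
The plan is to take $G_0$ to be the INW generator \cite{INW94} instantiated with the parameters prescribed by \cite{BRRY14}'s weight-based analysis, and then translate their scalar guarantee into the matrix-vector form stated here. Concretely, I would use the $\log n$-level expander-product construction with per-step second eigenvalue $\Theta(\delta/w)$; the stated seed length follows from the standard INW recursion $d(n) = d(n/2) + O(\log\log n + \log(w/\delta))$.

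First, I would reduce the $\norm[\infty]{\cdot}$ matrix-vector bound to a scalar statement that sits directly inside \cite{BRRY14}'s setup. For a fixed starting state $u \in [w]$, observe that $(\bM_{\ell..r}(x) y)[u] = y[v_r(x;u)]$, where $v_r(x;u)$ denotes the terminal layer-$r$ state of the walk from $u$ on input $x$. Hence the $u$-th entry of the matrix error applied to $y$ equals $\ex[s]{y[v_r(G_0(s)_{[\le r-\ell]};u)]} - \ex[x]{y[v_r(x;u)]}$, and the infinity-norm error is its maximum over $u$. So it suffices to show, uniformly over $u$, that $G_0$ fools the expected value of an arbitrary real-valued label $y$ on the terminal layer of the sub-ROBP from layer $\ell$ to layer $r$.

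Next, I would apply \cite{BRRY14}'s potential argument to this scalar quantity. Their argument proceeds by induction on the $\log n$ INW levels, showing that each level introduces pseudorandom error at most the expander's second eigenvalue times the local weight of the current sub-program, with the errors telescoping to a total of $\delta \cdot W(\ell, r, y)$. Two minor extensions are needed. The first is the extension from $\{0,1\}$-valued final-layer indicators, as originally stated in \cite{BRRY14}, to arbitrary $y \in \bbR^w$; this follows from the shift-and-scale reduction noted in the paper's own footnote, since adding a constant vector to $y$ affects neither side of the inequality and both sides scale homogeneously in $y$. The second, which is the main (mild) subtlety, is applying the bound to every sub-ROBP from layer $\ell$ to layer $r$ rather than only the full length-$n$ program; this is handled by padding the sub-ROBP up to length $n$ with identity transitions, which preserves both $G_0$'s guarantee and $W(\ell, r, y)$, and then invoking \cite{BRRY14} on the padded program.
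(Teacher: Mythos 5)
Your proposal is correct and follows the same route the paper intends. The paper does not spell out a proof of this lemma: it cites the main result of \cite{BRRY14} and relegates the two extensions you identify — generalizing the accept labels from $[0,1]^w$ to $\bbR^w$ by shift-and-scale, and handling sub-programs $\bM_{\ell..r}$ by padding with identity transitions (which adds zero weight) — to footnotes. Your reduction of the $\norm[\infty]{\cdot}$ bound to a scalar statement by fixing the start state $u$, noting that $(\bM_{\ell..r}(x)y)[u]=y[v_r(x;u)]$, and then invoking the potential argument of \cite{BRRY14} for the sub-program, is exactly the translation the paper's footnote is gesturing at; note that \cite{BRRY14}'s weight is already a sum over all states, so the single bound $\delta\,W(\ell,r,y)$ covers every $u$ simultaneously. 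One small parameter quibble: you state a per-level second eigenvalue of $\Theta(\delta/w)$ but also a per-level seed increment of $O(\log\log n+\log(w/\delta))$; the $\log\log n$ term comes from the fact that the telescoped errors over $\log n$ levels each carry a $\lambda\cdot(\text{local weight})$ contribution summing to $\Theta(\lambda\log n\cdot W)$, so the eigenvalue should really scale like $\Theta(\delta/(w\log n))$ (equivalently, $\delta/\log n$ once the $w$-dependence is absorbed into how $\delta$ is later instantiated against $W^*=w^2$). This does not affect the soundness of the argument.
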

Now define $W^*:=w^2$, which by \Cref{lemma:weight-bound} implies $W^*\ge \max_{y:\norm[\infty]{y}=1}\left(W(0,n,y)\right)$. To simplify notation, we define \emph{weight approximation} as follows.
\begin{definition}
For every $0\le \ell <r \le n$, we say $\widetilde{\bM_{\ell..r}}$ is a $\eps_0$-weight approximation of $\bM_{\ell..r}$ if
$$\forall y\in\bbR^w \quad \norm[\infty]{\left(\widetilde{\bM_{\ell..r}}-\bM_{\ell.r}\right)y} \le \eps_0\cdot \frac{W(\ell,r,y)}{W^*}.$$
\end{definition}
\noindent 
 Note that $\widetilde{\bM_{\ell..r}}$ being a $\delta$-weight approximation of $\bM_{\ell..r}$ also implies $\norm[\infty]{\widetilde{\bM_{\ell..r}}-\bM_{\ell..r}}\le \delta$. Now fix a parameter $\gamma>0$ to be specified later, and define $\lvl[i]{\eps}=\frac{\gamma^{i+1}}{10(i+1)^2\log(n)}$ as in \Cref{lemma:error}. Let $G_0$ be the PRG in \Cref{lemma:BRRY} with parameter $\delta=\lvl[0]{\eps}/(3W^*)$, and for every $(\ell,r)\in\BS_n$ such that $r-\ell>1$,  define $$\lvl[0]{\bM_{\ell..r}}:=\ex[s\sim\bits{d_0}]{\bM_{\ell..r}(G_0(s)_{[\le r-\ell]})},$$ which is a $(\lvl[0]{\eps}/3)$-weight approximation by \Cref{lemma:BRRY}.
Then define $\lvl[k]{\bM_{\ell..r}}$ recursively as in \Cref{eq:recursion}. The following is our main lemma for proving \Cref{thm:WPRG}:
\begin{lemma}[main]\label{lemma:WPRG-main}
For every $k\in\bbN$, every $y\in\bbR^w$ and every $(\ell,r)\in\BS_n$,
$\lvl[k]{\bM_{\ell..r}}$ is a $C_t\lvl[k]{\eps}$-weight approximation of $\bM_{\ell..r}$, where $t=\log(r-\ell)$ and $C_t=(1+1/\log(n))^t/3$.
\end{lemma}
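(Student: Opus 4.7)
My plan is to prove the lemma by induction on $t=\log(r-\ell)$, taking the full statement (quantified over all $k\in\bbN\cup\{0\}$) as the inductive hypothesis. The base case $t=0$ is immediate: when $r-\ell=1$, the recursive definition \Cref{eq:recursion} sets $\lvl[k]{\bM_{\ell..r}}=\bM_r=\bM_{\ell..r}$ for every $k$, so the error is identically zero. For $t\ge 1$ and $k=0$, the matrix $\lvl[0]{\bM_{\ell..r}}$ is defined directly as the $G_0$-derandomized transition matrix, and \Cref{lemma:BRRY} with $\delta=\lvl[0]{\eps}/(3W^*)$ gives that it is a $(\lvl[0]{\eps}/3)$-weight approximation; since $C_t\ge 1/3$, this suffices.

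The substantive case is $t\ge 1$ and $k\ge 1$. Here I apply \Cref{lemma:identity} with $m=(\ell+r)/2$ to decompose $\lvl[k]{\bDel_{\ell..r}}y$ into (a) the two ``linear'' terms $\lvl[k]{\bDel_{\ell..m}}\bM_{m..r}y$ and $\bM_{\ell..m}\lvl[k]{\bDel_{m..r}}y$, and (b) the two ``bilinear'' sums $\sum_{i+j=k}\lvl[i]{\bDel_{\ell..m}}\lvl[j]{\bDel_{m..r}}y$ and the analogous $i+j=k-1$ sum. For the linear terms, the IH at depth $t-1$ bounds the first by $C_{t-1}\lvl[k]{\eps}\cdot W(\ell,m,\bM_{m..r}y)/W^*$ and, together with $\norm[\infty]{\bM_{\ell..m}}\le 1$ (right-stochasticity), bounds the second by $C_{t-1}\lvl[k]{\eps}\cdot W(m,r,y)/W^*$. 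Summing and applying \Cref{lemma:weight} (weight additivity) yields exactly $C_{t-1}\lvl[k]{\eps}\cdot W(\ell,r,y)/W^*$. For each bilinear summand, I use IH at level $j$ on the right half to bound $\norm[\infty]{\lvl[j]{\bDel_{m..r}}y}\le C_{t-1}\lvl[j]{\eps}\cdot W(m,r,y)/W^*$, then combine with the operator-norm implication $\norm[\infty]{\lvl[i]{\bDel_{\ell..m}}}\le C_{t-1}\lvl[i]{\eps}$ (obtained from IH together with \Cref{lemma:weight-bound}). After invoking $W(m,r,y)\le W(\ell,r,y)$ and summing, \Cref{lemma:error} contracts the bilinear contribution to at most $(C_{t-1}^2/\log n)\lvl[k]{\eps}\cdot W(\ell,r,y)/W^*$.

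Adding the two contributions gives $\norm[\infty]{\lvl[k]{\bDel_{\ell..r}}y}\le C_{t-1}(1+C_{t-1}/\log n)\lvl[k]{\eps}\cdot W(\ell,r,y)/W^*$, which is at most $C_{t-1}(1+1/\log n)\lvl[k]{\eps}\cdot W(\ell,r,y)/W^*=C_t\lvl[k]{\eps}\cdot W(\ell,r,y)/W^*$ as required, provided $C_{t-1}\le 1$. The latter holds because $t\le \log n$ and $(1+1/\log n)^{\log n}\le e$, so $C_{t-1}\le e/3<1$. I expect the main (and essentially only) delicate step to be cleanly separating the linear contribution from the bilinear one so that \Cref{lemma:error} applies to the latter, and tracking the constants tightly enough that the slack $(1+1/\log n)$ per level of recursion precisely absorbs the bilinear cross terms without blowing up the induction; the remaining manipulations are routine bookkeeping with weight additivity and the operator-norm consequence of weight approximation.
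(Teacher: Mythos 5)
Your proposal is correct and follows essentially the same path as the paper's proof: induct on $t$ (with $k=0$ as a separate base case handled by \Cref{lemma:BRRY}), apply \Cref{lemma:identity} at the inductive step, bound the two linear terms by weight-additivity (\Cref{lemma:weight}) and the bilinear sums via the operator-norm consequence of weight approximation together with \Cref{lemma:error}, then absorb the bilinear contribution into the $(1+1/\log n)$ growth factor. You are in fact slightly more explicit than the paper about why $C_{t-1}\le 1$ (needed to collapse $C_{t-1}(1+C_{t-1}/\log n)\le C_t$), which the paper leaves implicit; otherwise the two arguments coincide.
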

\begin{proof}
We prove the lemma by induction over $t$ and $k$. The first base case $t=0$ is trivial since $\lvl[k]{\bM_{\ell..r}}=\bM_{\ell..r}$. The second base case $k=0$ is also true by definition. For the general case, first we note that the lemma also implies $\norm[\infty]{\lvl[k]{\bDel_{\ell.r}}}\le C_t\lvl[k]{\eps}\le \lvl[k]{\eps}$. Then observe that by \Cref{lemma:identity} and sub-additivity/sub-multiplicativity of infinity norm, we have 
\begin{align*}
\norm[\infty]{\lvl[k]{\bDel_{\ell..r}} y}
\le& \sum_{i+j\in \{k-1,k\}} \norm[\infty]{\lvl[i]{\bDel_{\ell..m}} \lvl[j]{\bDel_{m..r}}y}+\norm[\infty]{\bM_{\ell..m}\lvl[k]{\bDel_{m..r}} y} + \norm[\infty]{\lvl[k]{\bDel_{\ell..m}} \bM_{m..r}y}\\
\le& \sum_{i+j\in\{k-1,k\}} \norm[\infty]{\lvl[i]{\bDel_{\ell..m}}}\norm[\infty]{ \lvl[j]{\bDel_{m..r}}y}+\norm[\infty]{\bM_{\ell..m}}\norm[\infty]{\lvl[k]{\bDel_{m..r}} y} + \norm[\infty]{\lvl[k]{\bDel_{\ell..m}} \bM_{m..r}y}\\
\le& \sum_{i+j\in\{k-1,k\}} \left(C_{t-1}^2\lvl[i]{\eps}\lvl[j]{\eps}\cdot \frac{W(m,r,y)}{W^*}\right) +C_{t-1}\lvl[k]{\eps}\cdot\frac{W(m,r,y)+W(\ell,m,\bM_{m..r}y)}{W^*} \comment{induction}\\
\le&   \,\,C_t\lvl[k]{\eps} \cdot \frac{W(\ell,r,y)}{W^*}. 
\comment{by \Cref{lemma:error} and \Cref{lemma:weight}}.
\end{align*}
In other words, $\lvl[k]{\bM_{\ell..r}}$ is a $C_t\lvl[k]{\eps}$-weight approximation of $\bM_{\ell..r}$.
\end{proof}
\Cref{lemma:WPRG-main} shows that $\lvl[k]{\bM_{0..n}}$ is a $\lvl[k]{\eps}$-weight approximation, which also implies $\norm[\infty]{\lvl[k]{\bM_{0..n}}-\bM_{0..n}}\le \lvl[k]{\eps}$. It remains to construct a WPRG that actually ``implements" $\lvl[k]{\bM_{0..n}}$. This step is rather standard and is essentially the same as the corresponding step in \cite{CHLTW23}: to get the seed length as claimed in \Cref{thm:WPRG}, we need to further ``derandomize" $\lvl[k]{\bM_{0..n}}$ using the technique in \cite{CDRST21, PV21}. In short, we expand the recursive formula for $\lvl[k]{\bM_{0..n}}$ and get an ``error reduction polynomial" over matrices $\lvl[0]{\bM_{i..j}}$. One can show that there are at most $K=n^{O(k)}$ terms in the polynomial, and each term has at most $h=k\log(n)$ factors. Then we can use the INW generator~\cite{INW94} for length $h$ and width $w$ to approximate each term with error $\eps/2K$, which gives us a $\eps/2$-approximation to $\lvl[k]{\bM_{0..n}}$. We discuss the details in \Cref{sec:final-WPRG}. 

\begin{remark}
Note that our construction and proof also works for ``small-weight ROBPs" in general, if we define $W^*=\max_{B,y:\norm[\infty]{y}=1}\left(W(0,n,y)\right)$ instead. This only costs additional $O(\log(n)\log(W^*))$ bits of seed length. We note that this generality is important in some applications, such as the WPRG for width-$3$ ROBP in \cite{CHLTW23}. 
\end{remark}

\subsection{Final WPRG construction}\label{sec:final-WPRG}
To simplify notation, we assume without loss of generality that the first output bit of $G_0$ is unbiased, i.e. $\pr[s\in\bits{d_0}]{G_0(d_0)_{[1]}=1}=1/2$.\footnote{To get such a PRG, we can simply take a PRG $G_0'$ from \cite{BRRY14} with $(n-1)$-bit output and define $G_0(b\circ s)=b\circ G_0'(s)$, where $b$ is the first bit.} Then we can merge the two different base cases by defining $\lvl[0]\bM_{r-1..r}:=\bM_r=\lvl[k]{\bM_{r-1..r}}=\ex[s\in\bits{d_0}]{\bM_{r-1..r}(G_0(s)_{[\le 1]})}$. Now consider the following notation.
\begin{definition}\label{def:expand}
For every $0\le \ell<r\le n$, let $\mathsf{IS}_{\ell..r}$ denote the set of increasing sequences $\mathsf{sq}=(i_0,i_1\ldots,i_h)$ s.t. $\ell=i_0<i_1<\ldots<i_h=r$. We say $h$ is the \emph{length} of $\mathsf{s}$. For every $\mathsf{sq}\in\mathsf{IS}_{\ell..r}$, define 
$$\lvl{\bM}_{\mathsf{sq}}:=\prod_{j=1}^h \lvl{\bM}_{i_{j-1}..i_j}.$$

\end{definition}

\noindent
Given the notation above, we get the following lemma regarding the expansion of $\lvl[k]{\bM_{0..n}}$, which is not hard to prove by induction. For completeness we include a proof in \Cref{appendix:expansion}.
\begin{lemma}\label{lemma:expand}
For every $k\in\bbN$ and every $(\ell,r)\in\BS_n$, there is a (multi)set $S\subseteq \mathsf{IS}_{\ell..r}\times \{-1,+1\}$ which satisfies that 
\begin{itemize}
\item 
$\lvl[k]{\bM_{\ell..r}}=\sum_{(\mathsf{sq},\sigma)\in S} \sigma \lvl[0]{\bM_{\mathsf{sq}}}.$
\item 
$|S|\le (r-\ell)^{2k}$
\item 
For every $(\mathsf{sq},\sigma)\in S$, the length of $\mathsf{sq}$ is at most $k\log(r-\ell)+1$
\end{itemize}
\end{lemma}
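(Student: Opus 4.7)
The plan is to prove the lemma by a double induction, with an outer induction on $t = \log(r-\ell)$ and an inner induction on $k$. The two base cases are immediate: when $r - \ell = 1$, the recursion gives $\lvl[k]{\bM_{r-1..r}} = \bM_r$ for all $k$, and taking $S = \{((r-1, r), +1)\}$ satisfies all three conditions. When $k = 0$ and $r - \ell > 1$, the singleton $S = \{((\ell, r), +1)\}$ works by definition of $\lvl[0]{\bM_{\ell..r}}$.

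For the inductive step with $k \geq 1$ and $r - \ell > 1$, I would apply the recursive formula (\ref{eq:recursion}) to express $\lvl[k]{\bM_{\ell..r}}$ as a signed sum of products $\lvl[i]{\bM_{\ell..m}} \cdot \lvl[j]{\bM_{m..r}}$ with $i+j \in \{k-1, k\}$, where $m = (\ell+r)/2$. The inductive hypothesis produces multisets $S^{(i)}_{\ell..m}$ and $S^{(j)}_{m..r}$ expanding each factor as a signed sum over $\mathsf{IS}_{\ell..m}$ and $\mathsf{IS}_{m..r}$ respectively. Multiplying out, each cross term is of the form $\sigma_1 \sigma_2 \lvl[0]{\bM_{\mathsf{sq}_1}} \cdot \lvl[0]{\bM_{\mathsf{sq}_2}}$, which by Definition \ref{def:expand} equals $\sigma_1 \sigma_2 \lvl[0]{\bM_{\mathsf{sq}}}$ where $\mathsf{sq} \in \mathsf{IS}_{\ell..r}$ is obtained by gluing $\mathsf{sq}_1$ and $\mathsf{sq}_2$ at their common endpoint $m$. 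Collecting yields the desired multiset $S$, giving the identity in the first bullet.

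For the size bound, the inductive hypothesis gives $|S^{(i)}_{\ell..m}| \leq ((r-\ell)/2)^{2i}$ and $|S^{(j)}_{m..r}| \leq ((r-\ell)/2)^{2j}$, so each pair $(i,j)$ contributes at most $((r-\ell)/2)^{2(i+j)}$ sequences. Summing over the $k+1$ pairs with $i+j=k$ and the $k$ pairs with $i+j=k-1$ gives $|S| \leq (2k+1)\cdot((r-\ell)/2)^{2k} \leq (r-\ell)^{2k}$, where the final inequality uses $2k+1 \leq 4^k$ for $k \geq 1$. For the length bound, gluing preserves lengths additively, and the hypothesis bounds each half by $i(\log(r-\ell)-1)+1$ and $j(\log(r-\ell)-1)+1$; their sum is $(i+j)\log(r-\ell)-(i+j)+2$, which for $i+j \in \{k-1,k\}$ and $k \geq 1$, $r-\ell \geq 2$ is at most $k\log(r-\ell)+1$.

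The main (and essentially only) obstacle is the arithmetic bookkeeping, namely verifying that the $(r-\ell)^{2k}$ bound absorbs the factor of $(2k+1)$ introduced by summing the two nested recursive terms, and that the length bound tightens cleanly under the binary split. No conceptual difficulty arises: the recursion is genuinely binary, and both the concatenation of increasing sequences and the multiplication of signs compose in the obvious way.
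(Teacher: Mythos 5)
Your proposal is correct and follows essentially the same double induction (on $t=\log(r-\ell)$ and $k$) as the paper's proof in the appendix, with the same distributive expansion of the recursion, the same concatenation of increasing sequences at the midpoint $m$, and the same arithmetic (in particular, absorbing the $(2k+1)$ factor via $2k+1\le 4^k$). The only cosmetic difference is that the paper writes the bounds in terms of $t$ and powers of $2$ while you write them directly in terms of $r-\ell$; these are equivalent since $r-\ell=2^t$.
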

\noindent
In addition, we would need to derandomize each term $\lvl[0]{\bM_{\mathsf{sq}}}$ using the following matrix view of INW generator~\cite{INW94}, which can be found in, e.g., \cite{BCG20}: 
\begin{lemma}\label{lemma:INW}
Let $\Sigma$ be a finite set of symbols. Suppose for every $i\in[h]$, there is a matrix-valued function $\bA_i:\Sigma \to \bbR^{w\times w}$ which on every input in $\Sigma$ outputs a stochastic matrix. Then for every $\eps_{\mathrm{INW}}>0$ there exists an explicit function $G_{\mathrm{INW}}:\bits{d}\to\Sigma^h$ such that 
$$\norm[\infty]{\ex[s\in\bits{d}]{\prod_{i=1}^h{\bA_i(G_{\mathrm{INW}}(s)_{[i]})}}-\prod_{i=1}^h\ex[x\in\Sigma]{\bA_i(x)}}\le \eps_{\mathrm{INW}},$$
and  $d=O(\log\abs{\Sigma}+\log(h)\log(hw/\eps_{\mathrm{INW}}))$.
\end{lemma}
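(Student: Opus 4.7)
The plan is to follow the classical Impagliazzo--Nisan--Wigderson construction, instantiated with explicit spectral expanders, and to analyze the error by induction on $h$, which I assume is a power of $2$ for simplicity. I would build $G^{(j)}:\bits{d_j}\to\Sigma^{2^j}$ recursively: the base case $G^{(0)}:\bits{\log|\Sigma|}\to\Sigma$ is a bijection (so the marginal is exactly uniform on $\Sigma$) with zero error; for $j\geq 1$ I pick an explicit $\lambda_j$-spectral expander $H_j$ on vertex set $\bits{d_{j-1}}$ of degree $D_j$ and define
$$G^{(j)}(s,t) := G^{(j-1)}(s)\circ G^{(j-1)}(N_{H_j}(s,t)),$$
where $N_{H_j}(s,t)$ denotes the $t$-th neighbor of $s$ in $H_j$. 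Then $d_j=d_{j-1}+\log D_j$, and the final generator is $G_{\mathrm{INW}}=G^{(\log h)}$.

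For the analysis, let
$$\eps_j := \norm[\infty]{\ex[s]{\prod_{i=1}^{2^j}\mathbf{A}_i(G^{(j)}(s)_{[i]})} - \prod_{i=1}^{2^j}\ex[x]{\mathbf{A}_i(x)}},$$
and I would prove $\eps_j\leq 2\eps_{j-1}+w^2\lambda_j$ by induction. Split the product at the midpoint: for each fixed $s$, define the random matrices $\mathbf{P}(s)=\prod_{i=1}^{2^{j-1}}\mathbf{A}_i(G^{(j-1)}(s)_{[i]})$ and $\mathbf{Q}(s)=\prod_{i=1}^{2^{j-1}}\mathbf{A}_{2^{j-1}+i}(G^{(j-1)}(s)_{[i]})$, both stochastic since products of stochastic matrices are stochastic, and let $\mathbf{P}^*,\mathbf{Q}^*$ denote the corresponding ideal fully-averaged products. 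Then decompose
$$\ex[s,t]{\mathbf{P}(s)\mathbf{Q}(N_{H_j}(s,t))} - \mathbf{P}^*\mathbf{Q}^* = \Bigl(\ex[(s,s')\sim H_j]{\mathbf{P}(s)\mathbf{Q}(s')} - \ex[s]{\mathbf{P}(s)}\ex[s']{\mathbf{Q}(s')}\Bigr) + \Bigl(\ex[s]{\mathbf{P}(s)}\ex[s']{\mathbf{Q}(s')} - \mathbf{P}^*\mathbf{Q}^*\Bigr).$$
The second bracket I bound by $2\eps_{j-1}$ using a telescope $\mathbb{E}[\mathbf{P}]\mathbb{E}[\mathbf{Q}]-\mathbf{P}^*\mathbf{Q}^*=(\mathbb{E}[\mathbf{P}]-\mathbf{P}^*)\mathbb{E}[\mathbf{Q}]+\mathbf{P}^*(\mathbb{E}[\mathbf{Q}]-\mathbf{Q}^*)$, submultiplicativity of $\norm[\infty]{\cdot}$, the inductive hypothesis, and the fact that $\norm[\infty]{\mathbf{P}^*},\norm[\infty]{\ex[s']{\mathbf{Q}(s')}}\leq 1$. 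The first bracket is the matrix-valued expander-mixing term, which I would bound by applying the scalar expander-mixing lemma entrywise to $f(s)=\mathbf{P}(s)_{u,k}$ and $g(s')=\mathbf{Q}(s')_{k,v}$; since these entries lie in $[0,1]$, each of the $w$ scalar inner products differs from the uncorrelated version by at most $\lambda_j$, so every entry of the difference has magnitude at most $w\lambda_j$, and the row $\ell_1$-sum, hence the $\ell_\infty$ operator norm, is at most $w^2\lambda_j$.

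Unrolling $\eps_j\leq 2\eps_{j-1}+w^2\lambda_j$ from $\eps_0=0$ yields $\eps_{\log h}\leq w^2\sum_{i=1}^{\log h}2^{\log h - i}\lambda_i$. Choosing $\lambda_i=\eps_{\mathrm{INW}}/(hw^2)$ at every level forces $\eps_{\log h}\leq \eps_{\mathrm{INW}}$. Standard explicit constructions of expanders (e.g., appropriate powers of a constant-degree Ramanujan or zig-zag graph) achieve degree $D_i=\poly(hw/\eps_{\mathrm{INW}})$, contributing $O(\log(hw/\eps_{\mathrm{INW}}))$ bits per level; summing over the $\log h$ recursive levels plus $\log|\Sigma|$ bits at the base yields the claimed $d=O(\log|\Sigma|+\log(h)\log(hw/\eps_{\mathrm{INW}}))$. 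Explicitness of $G_{\mathrm{INW}}$ in space $O(d)$ follows from the standard recursive unwinding of the definition combined with space-efficient evaluation of expander neighbors. The main technical ingredient to nail down is the matrix form of expander mixing in the $\ell_\infty$ operator norm with a clean $\poly(w)$ dependence; my naive entrywise argument loses a $w^2$ factor, but this is absorbed inside the $\log(w)$ term of the seed length and so causes no loss in the stated bound.
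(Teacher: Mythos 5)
The paper does not prove this lemma itself; it cites the matrix-valued INW statement from \cite{INW94,BCG20}, so there is no in-paper proof to compare against. Your proof reconstructs the standard argument from those references and is correct. The two-part error decomposition --- a recursive term controlled by submultiplicativity of $\norm[\infty]{\cdot}$ together with the fact that stochastic matrices (and averages of stochastic matrices) have $\infty$-norm at most $1$, plus an expander-mixing term handled entrywise via the scalar mixing lemma --- yields the recurrence $\eps_j\le 2\eps_{j-1}+w^2\lambda_j$ exactly as you state; unrolling from $\eps_0=0$ with $\lambda_i=\eps_{\mathrm{INW}}/(hw^2)$ and explicit degree-$\poly(1/\lambda_i)$ expanders gives the claimed error and seed length. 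You correctly note that the $w^2$ loss from the entrywise mixing bound is harmless, since $\lambda_i$ only enters the seed length through $\log(1/\lambda_i)=O(\log(hw/\eps_{\mathrm{INW}}))$. Two cosmetic technicalities are glossed over: when $|\Sigma|$ is not a power of two you cannot take $G^{(0)}$ to be a bijection, so one should either pad $\Sigma$ or use $\log|\Sigma|+O(\log(h/\eps_{\mathrm{INW}}))$ base bits to sample $\Sigma$ within statistical distance $O(\eps_{\mathrm{INW}}/h)$ and fold this base error into the recurrence; and when $h$ is not a power of two one pads the product with identity matrix-valued functions. Neither affects the stated asymptotic bound.
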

\noindent
Now we are ready to prove \Cref{thm:WPRG}.
\begin{proof}[Proof of \Cref{thm:WPRG}]
Let $S\subseteq \mathsf{IS}_{0..n}\times [-1,1]$ be the set defined in \Cref{lemma:expand} s.t. 
$\lvl[k]{\bM_{0..n}}=\sum_{(\mathsf{sq},\sigma)\in S} \sigma\lvl[0]{\bM_{\mathsf{sq}}}$. Without loss of generality we can assume that $S$ has size exactly $2^{2k\log(n)}$ by adding dummy sequences with weight $\sigma=0$. In addition, note that there is a enumeration function $E_S:\bits{2k\log(n)}\to S$ that can be implemented in space $O(\log(k)\log(n))$ following recursive formula (\ref{eq:recursion}).\footnote{That is, we use the first $\ceil{\log(2k+1)}$ bits as index to determine a term in the recursive formula, then discard these $\ceil{\log(2k+1)}$ bits and recurse. If there's any undefined index, simply return a dummy sequence with weight $0$.} Then consider $G_{\mathrm{INW}}:\bits{d_{\mathrm{INW}}}\to \Sigma^h$ in \Cref{lemma:INW} with $\Sigma=\bits{d_0},h=k\log(n)+1$ and $\eps_{\mathrm{INW}}=\eps/(2|S|)$, then define $d=2k\log(n)+d_{\mathrm{INW}}$.
The final WPRG construction $(\rho,G):\bits{d}\to \bbR\times \bits{n} $ is as follows. On any input $s$,
\begin{enumerate}
\item 
Parse $s$ as $(s_{\mathrm{enum}}, s_{\mathrm{INW}}) \in \bits{2k\log(n)}\times \bits{d_{\mathrm{INW}}}$
\item
Define $((i_0,i_1,\ldots,i_h),\sigma):=E_S(s_{\mathrm{enum}})$.
\item 
For $j\in[h]$, define $r_j:=G_0(G_{\mathrm{INW}}(y)_{[j]})\in\bits{n}$. 
\item
Output $(\rho(s),G(s)):=(\sigma|S|, (r_1)_{[\le i_1-i_0]}\circ (r_2)_{[\le i_2-i_1]}\circ \ldots \circ (r_h)_{[\le i_h-i_{h-1}]})$.
\end{enumerate}
Next we prove the correctness of $G$. Observe that 
$$\ex[s]{\rho(s)\bM_{0..n}(G(s))}=\sum_{((i_0,\ldots,i_h),\sigma)\in S} \sigma\cdot \ex[s_{INW}]{\prod_{j=1}^h \bM_{i_{j-1}..i_j}(G_0(G_{\mathrm{INW}}(s_{\mathrm{INW}})_{[j]}))}.$$
For every term in the above equation, consider the matrix-valued functions $\bA_j:\bits{d_0}\to\bbR^{w\times w}$ s.t. $\bA_j(r)=\bM_{i_{j-1}..i_j}(G_0(r)_{[\le i_j-i_{j-1}]})$. Note that $\ex[r]{\bA_j(r)}=\lvl[0]{\bM_{i_{j-1}..i_j}}$. Then by \Cref{lemma:INW} we have 
$$\norm[\infty]{\ex[s_{INW}]{\prod_{j=1}^h \bM_{i_{j-1}..i_j}(G_{\mathrm{INW}}(s_{\mathrm{INW}})_j)} - \lvl[0]{\bM_{(i_0,i_1,\ldots,i_h)}}}\le \eps/(2|S|),$$
which by the sub-additivity of $\norm[\infty]{\cdot}$ implies 
$$\norm[\infty]{\ex[s]{\rho(s)\bM_{0..n}(G(s))}-\lvl[k]{\bM_{0..n}}}\le \eps/2.$$
We pick suitable $\gamma,k$ (to be specified later) so that  $\lvl[k]{\eps}\le \eps/2$. Then by \Cref{lemma:WPRG-main} we have $$\norm[\infty]{\ex[s]{\rho(s)\bM_{0..n}(G(s))}-\lvl[k]{\bM_{0..n}}}\le \eps.$$
Then consider the vectors $\vst,\ved\in\bbR^{w}$ corresponding to the start and end states as discussed in \Cref{sec:prelim}. Observe that $\norm[1]{\vst}=1$ and $\norm[\infty]{\ved}\le 1$ by definition. Therefore we have 
\begin{align*}
\abs{\ex[s\in\bits{d}]{\rho(s)B(G(s))}-\ex[x\in\bits{n}]{B(x)}}
&=\abs{\vst^\top\left(\ex[s]{\rho(s)\bM_{0..n}(G(s))}-\ex[x]{\bM_{0..n}(x)}\right)\ved}\\
&\le \norm[1]{\vst} \cdot \norm[\infty]{\ex[s]{\rho(s)\bM_{0..n}(G(s))}-\bM_{0..n}}\cdot \norm[\infty]{\ved}\\
&\le \eps.
\end{align*}
Finally we analyze the seed length with an unspecified parameter $0<\gamma<1/\log(n)$. Take $k$ to be the minimum integer s.t. $\lvl[k]{\eps}\le \eps/2$. Observe that $\log(1/\lvl[0]{\eps})=O(1/\gamma)$ and $k=O(\log(1/\eps)/\log(1/\gamma))$. This implies
$$d=d_0+O(\log(h)\log(hw/\eps)+k\log(n))=d_0+\tilde{O}(\log(w/\eps)+k\log(n)).$$
By \Cref{lemma:BRRY}, we have $$d_0=O(\log(n)\log(wW^*/\gamma)+\log(n)\log\log(n))=\tilde{O}(\log(n)\log(w/\gamma)).$$
Therefore,
$$d=\tilde{O}\left(\log(n)\log(w/\gamma)+\frac{\log(n)\log(1/\eps)}{\log(1/\gamma)}+\log(1/\eps)\right).$$
Taking $\gamma=2^{-\sqrt{\log(n)}}$,  we get 
$$d=\tilde{O}\left(\log(n)\left(\log(w)+\sqrt{\log(1/\eps)}\right)+\log(1/\eps)\right).$$
\noindent
Finally, observe that the space complexity is $O(d_0+\log(k)\log(n)+d_{\mathrm{INW}})=O(d)$. Therefore the WPRG is explicit.
\end{proof}

\section{Non-black-box Derandomization for Regular ROBPs}\label{sec:white-box}
We prove \Cref{thm:white-box} in this section. Inspired by \cite{CHLTW23}, we use the notion of SV approximation to capture the ``amount of mixing". To simplify notation, we define the function $D:\bbR^{w\times w}\times \bbR^w\to \bbR$ to be $D(\bA,y):=\norm{y}^2-\norm{\bA y}^2$, which plays the same role as the weight measure in the proof of \Cref{thm:WPRG}. The following fact is straightforward from the definition.
\begin{fact}
$D(\bB,y)+D(\bA,\bB y)=D(\bA\bB,y)$.
\end{fact}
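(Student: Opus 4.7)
The plan is to prove this by a direct expansion using the definition $D(\bA,y)=\norm{y}^2-\norm{\bA y}^2$, and then observing that the resulting expression telescopes. Since the claim is purely algebraic and involves no structural properties of $\bA$ or $\bB$ (not even that they are stochastic or square), I expect no real obstacle here; the only ``work'' is to line up the intermediate term correctly.

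Concretely, I would first substitute the definition into the left-hand side to obtain
\[
D(\bB,y)+D(\bA,\bB y)=\left(\norm{y}^2-\norm{\bB y}^2\right)+\left(\norm{\bB y}^2-\norm{\bA(\bB y)}^2\right).
\]
Next I would cancel the $\norm{\bB y}^2$ terms, which leaves $\norm{y}^2-\norm{\bA \bB y}^2$. Finally, I would recognize this quantity as $D(\bA\bB,y)$ by another application of the definition, using associativity of matrix-vector multiplication to write $\bA(\bB y)=(\bA\bB)y$.

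The main (and only) point to be careful about is the associativity step: the second summand on the left is written with $\bA$ acting on $\bB y$, while the right-hand side has the single matrix $\bA\bB$ acting on $y$; these agree because $\norm{\cdot}$ is the ordinary $2$-norm and matrix multiplication is associative. Because of this, the entire proof is a two-line computation and does not require induction, the ROBP structure, or any norm inequality. I therefore expect the formal proof in the paper to be essentially the two displayed lines above.
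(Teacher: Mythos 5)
Your proof is correct, and it is exactly the computation the paper has in mind: the paper states this Fact without proof, calling it ``straightforward from the definition,'' and the intended argument is precisely your two-line telescoping expansion. Nothing is missing.
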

\noindent
The notion of singular value approximation (SV approximation) is defined as follows.
\begin{definition}[SV approximation \cite{APPSV23}]
Let $\bW\in\bbR^{w\times w}$ be a doubly stochastic matrix. We say $\widetilde{\bW}$ is a $\eps$-SV approximation of $W$ if for every $x,y\in\bbR^{w}$,
$$\abs{x^\top (\widetilde{\bW}-\bW) y} \le \eps\cdot \left(\frac{D(\bW^\top,x)+D(\bW,y)}{2}\right).$$
Equivalently, for every $x,y\in\bbR^{w}$,
$$\abs{x^\top (\widetilde{\bW}-\bW) y} \le \eps\cdot \left(\sqrt{D(\bW^\top,x)\cdot D(\bW,y)}\right).$$
\end{definition}
\noindent 
The proof of \Cref{thm:white-box} is very similar to our proof of \Cref{thm:WPRG} in the previous section. First we also need a base case for the different approximation notion. As proved in \cite{APPSV23,CHLTW23}, there is a space-efficient implementation of SV approximation of random walk matrices based on derandomized squaring~\cite{RV05}:
\begin{lemma}[\cite{APPSV23,CHLTW23}]\label{lemma:SV-base}
For every $(\ell,r)\in\BS_n$, there is an algorithm that computes a $\delta$-SV approximation of $\bM_{\ell..r}$ in space $\tilde{O}(\log(nw)\log(1/\delta))$. Further, each entry of this approximation matrix  has bit length at most $O(\log(n)\log(1/\delta))$.
\end{lemma}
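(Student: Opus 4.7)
The plan is to construct $\widetilde{\bM_{\ell..r}}$ by the derandomized squaring procedure of Rozenman and Vadhan, analyzed as an SV approximation in \cite{APPSV23}. Since $(\ell,r)\in\BS_n$ we have $r-\ell=2^t$ for some $t\le\log(n)$, so I would build $\widetilde{\bM_{\ell..r}}$ along a balanced binary tree: at the leaves place the exact transition matrices $\bM_{\ell+1},\dots,\bM_r$; at each internal node combine the two approximations produced at its children using one step of derandomized squaring with respect to an expander on $[w]$.

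The key analytic ingredient, which I would cite from \cite{APPSV23}, is that derandomized squaring composes well under SV approximation: if $\widetilde{\bA},\widetilde{\bB}$ are $\delta_1$- and $\delta_2$-SV approximations of doubly stochastic $\bA,\bB$ and we derandomize using an expander whose second singular value is at most $\lambda$, the resulting matrix is a $(\delta_1+\delta_2+O(\lambda))$-SV approximation of $\bA\bB$. Choosing at every level an expander with second singular value $\Theta(\delta/\log n)$ makes the errors telescope, so that after $t\le\log n$ levels the root is a $\delta$-SV approximation of the true product $\bM_{\ell..r}$. Crucially the errors add rather than multiply; this additivity is exactly why SV approximation, rather than operator-norm approximation, is the right notion here.

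For the bit-length bound, at each level every entry of the output matrix is a sum of at most $D=\poly(\log(n)/\delta)$ rationals from the level below, so one level of derandomized squaring inflates the entry bit-length by at most $O(\log D)=\tilde{O}(\log(1/\delta))$ additive bits, and after $t=O(\log n)$ levels we are at $O(\log(n)\log(1/\delta))$ bits per entry. For the space analysis I would use the sequential-composition lemma (\Cref{lemma:sq-compose}) to evaluate the recursion: one entry at level $k{+}1$ is computed by enumerating $D$ expander-neighbors and recursively querying entries at level $k$. Each neighbor is specified by $\tilde{O}(\log(w/\delta))$ bits; composing across $O(\log n)$ levels and accounting for arithmetic on the $O(\log(n)\log(1/\delta))$-bit entries yields the stated $\tilde{O}(\log(nw)\log(1/\delta))$ space.

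The main step I expect to be delicate is verifying, in our layered (non-regular-graph) setting, that derandomized squaring of consecutive layer products really does produce an SV approximation with additively accumulating error. This is done in \cite{APPSV23} by a careful spectral argument; everything else (bit lengths, space composition, choice of expander parameter) is a standard bookkeeping exercise once that lemma is in hand.
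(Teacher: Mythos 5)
The paper itself does not prove \Cref{lemma:SV-base}; it simply cites \cite{APPSV23,CHLTW23}, so you are reconstructing an argument the paper treats as external. Your high-level plan is the right one and matches what the paper indicates: build $\widetilde{\bM_{\ell..r}}$ by derandomized products along a balanced binary tree of depth $\log(r-\ell)\le\log n$, cite the composition lemma from \cite{APPSV23} as the analytic engine, and do bit-length and space bookkeeping via \Cref{lemma:sq-compose} and \Cref{lemma:par-compose}. You also correctly flag the spectral analysis in \cite{APPSV23} as the step to cite rather than re-derive.

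However, the composition lemma as you state it is not the one you need, and the discrepancy is substantive. You claim that derandomizing the product of a $\delta_1$-SV approximation and a $\delta_2$-SV approximation with expander parameter $\lambda$ yields a $(\delta_1+\delta_2+O(\lambda))$-SV approximation. Unrolling this over a depth-$\log n$ binary tree with exact leaves gives root error $\sum_{\text{internal }v}O(\lambda)=\Theta(n\lambda)$, since each node doubles the contribution of the subtree below; with $\lambda=\Theta(\delta/\log n)$ this is $\Theta(n\delta/\log n)$, not $\delta$. The telescoping you want does not follow from the stated inequality. The correct composition is sharper: if $\widetilde{\bA},\widetilde{\bB}$ are $\delta_1,\delta_2$-SV approximations of doubly stochastic $\bA,\bB$ then $\widetilde{\bA}\widetilde{\bB}$ is roughly a $\bigl(\max(\delta_1,\delta_2)+\delta_1\delta_2\bigr)$-SV approximation of $\bA\bB$. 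The $\max$ comes out because the $D$-measure telescopes across the product,
$$D(\bA^\top,x)+D\bigl(\bB^\top,\bA^\top x\bigr)=D\bigl((\bA\bB)^\top,x\bigr),\qquad D(\bA,\bB y)+D(\bB,y)=D(\bA\bB,y),$$
so the two children's error bounds charge disjoint pieces of the total mixing; this is exactly the calculation done for the cross terms in the paper's own main lemma in \Cref{sec:white-box}. It is this $\max$, not a sum, that makes error accumulate per \emph{level} of the tree rather than per \emph{node}, and it is the genuine advantage of SV approximation over operator-norm approximation (operator-norm errors for sub-stochastic matrices already merely add, so ``adds rather than multiplies'' is not by itself the distinguishing feature). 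Once $\delta_1+\delta_2$ is replaced by $\max(\delta_1,\delta_2)$, your choice $\lambda=\Theta(\delta/\log n)$ does give a $\delta$-SV approximation at the root.

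A smaller issue: the auxiliary expander in derandomized squaring lives on the label/degree alphabet, not on $[w]$. Per level this adds $O(\log(1/\lambda))=\tilde{O}(\log(1/\delta))$ bits to a neighbor description, and $\log w$ is paid once to name a vertex, giving $\tilde{O}(\log w+\log(n)\log(1/\delta))$, which is within (and in fact slightly better than) the claimed $\tilde{O}(\log(nw)\log(1/\delta))$. Your written count of $\tilde{O}(\log(w/\delta))$ bits per level times $\log n$ levels is $\tilde{O}(\log(n)\log(w/\delta))$, which contains a spurious $\log(n)\log(w)$ term that does not fit under $\tilde{O}(\log(nw)\log(1/\delta))$ when, as in this paper's application, $\delta=\Theta(1/\log^2 n)$ so that $\log(1/\delta)=\Theta(\log\log n)$.
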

\noindent
We also need the following simple lemma, which can be found in \cite{CHLTW23}. We include its (short) proof for completeness. 
\begin{lemma}\label{lemma:SV-norm}
Suppose $\widetilde{\bW}$ is a $\delta$-SV-approximation of $\bW$, and let $\bDel=\widetilde{\bW}-\bW$. Then $$\norm[2]{\bDel y}\le \delta\sqrt{D(\bW,y)}.$$
\end{lemma}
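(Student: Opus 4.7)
The plan is to substitute $x = \bDel y$ into the ``equivalent'' (geometric-mean) form of the SV approximation guarantee, which will turn the left-hand side into $\norm{\bDel y}^2$, and then use a trivial bound on the $D(\bW^\top, \cdot)$ factor to cancel one factor of $\norm{\bDel y}$.

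More concretely, first I would apply the definition of $\delta$-SV approximation with the arbitrary test vector $x$ set equal to $\bDel y$. Since $x^\top \bDel y = (\bDel y)^\top (\bDel y) = \norm{\bDel y}^2$, this immediately gives
\[
\norm{\bDel y}^2 \;\le\; \delta \sqrt{D(\bW^\top, \bDel y)\cdot D(\bW, y)}.
\]
Next I would bound the first factor under the square root by noting that $D(\bW^\top, \bDel y) = \norm{\bDel y}^2 - \norm{\bW^\top \bDel y}^2 \le \norm{\bDel y}^2$, which holds for any matrix and any vector. Plugging this in yields
\[
\norm{\bDel y}^2 \;\le\; \delta \,\norm{\bDel y}\, \sqrt{D(\bW,y)}.
\]
Finally, if $\norm{\bDel y} = 0$ the claim is trivial; otherwise dividing both sides by $\norm{\bDel y}$ gives the desired inequality.

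There is no real obstacle here — the entire argument is essentially a one-line application of the SV approximation definition with a clever choice of test vector, followed by the trivial observation that $D(\bW^\top, \cdot)$ is at most $\norm{\cdot}^2$. The only thing one needs to be careful about is using the geometric-mean form of the SV approximation guarantee rather than the arithmetic-mean form, since the arithmetic-mean form would produce an additive $\norm{\bDel y}^2$ term on the right-hand side that does not cancel as cleanly.
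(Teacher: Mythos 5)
Your proof is correct and is essentially identical to the paper's: both substitute $x = \bDel y$ into the geometric-mean form of the SV-approximation definition, observe $D(\bW^\top, \bDel y) \le \norm{\bDel y}^2$, and divide through by $\norm{\bDel y}$. The only difference is that you spell out the degenerate case $\norm{\bDel y}=0$, which the paper leaves implicit.
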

\begin{proof}
Observe that 
$$\norm[2]{\bDel y}^2 = (\bDel y)^\top \bDel y \le \delta\sqrt{D(\bW,y)\cdot (\norm[2]{\bDel y}^2 - \norm[2]{\bW^\top \bDel y}^2)}\le\delta\sqrt{D(\bW,y)}\cdot \norm{\bDel y}.$$ 
\end{proof}
\noindent
Now for every $(\ell,r)\in\BS_n$, define $\lvl[0]{\bM_{\ell..r}}$ to be a $(\lvl[0]{\eps}/3)$-SV approximation of $\bM_{\ell..r}$, and define $\lvl[k]{\bM_{\ell..r}}$ using the recursion (\Cref{eq:recursion}). We prove the following lemma which is analogous to \Cref{lemma:WPRG-main}:
\begin{lemma}[main]
For every $k\in\bbN$ and every $(\ell,r)\in \BS_n$, 
$\lvl[k]{\bM_{\ell..r}}$ is a $C_t\lvl[k]{\eps}$-SV approximation of $\bM_{\ell..r}$, where $t=\log(r-\ell)$ and $C_t=(1+1/\log(n))^t/3$.
\end{lemma}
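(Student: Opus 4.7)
The plan is to mirror the proof of \Cref{lemma:WPRG-main} via double induction on $t$ and $k$, but working with the bilinear form $x^\top \lvl[k]{\bDel_{\ell..r}} y$ rather than $\norm[\infty]{\lvl[k]{\bDel_{\ell..r}} y}$, since SV approximation is naturally a bilinear notion. The base case $t=0$ is trivial because $\lvl[k]{\bM_{\ell..r}}=\bM_{\ell..r}$, and $k=0$ holds by the choice of the base matrices as $(\lvl[0]{\eps}/3)$-SV approximations. For the inductive step, I would apply \Cref{lemma:identity} to expand $\lvl[k]{\bDel_{\ell..r}}$ into four groups of terms, and bound $x^\top \lvl[k]{\bDel_{\ell..r}} y$ term by term using the inductive hypothesis at $t-1$.

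For the ``cross" terms $x^\top \lvl[i]{\bDel_{\ell..m}}\lvl[j]{\bDel_{m..r}} y$ with $i+j\in\{k-1,k\}$, I would rewrite them as $\bigl((\lvl[i]{\bDel_{\ell..m}})^\top x\bigr)^\top \bigl(\lvl[j]{\bDel_{m..r}} y\bigr)$, apply Cauchy--Schwarz, and then invoke \Cref{lemma:SV-norm} (both for $\lvl[j]{\bDel_{m..r}}$ and for the transpose, noting that a $\delta$-SV-approximation of $\bW$ is also a $\delta$-SV-approximation of $\bW^\top$) to get the bound $C_{t-1}^2\lvl[i]{\eps}\lvl[j]{\eps}\sqrt{D(\bM_{\ell..m}^\top,x)\cdot D(\bM_{m..r},y)}$. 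Since $D(\bM_{\ell..m}^\top,x)\le D(\bM_{\ell..r}^\top,x)$ and $D(\bM_{m..r},y)\le D(\bM_{\ell..r},y)$ (by the additivity of $D$), summing over $i+j\in\{k-1,k\}$ and applying \Cref{lemma:error} bounds all cross terms by $\,C_{t-1}^2\lvl[k]{\eps}/\log(n)\cdot\sqrt{D(\bM_{\ell..r}^\top,x) D(\bM_{\ell..r},y)}$.

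The main subtlety is handling the two ``linear" terms $x^\top\lvl[k]{\bDel_{\ell..m}}\bM_{m..r} y$ and $x^\top\bM_{\ell..m}\lvl[k]{\bDel_{m..r}} y$: naively bounding each by $C_{t-1}\lvl[k]{\eps}\sqrt{D(\bM_{\ell..r}^\top,x)D(\bM_{\ell..r},y)}$ gives a combined factor of $2C_{t-1}$, which would break the $(1+1/\log n)$-per-level growth. The trick is to apply induction with the correct $D$-factors: for the first term use $\sqrt{D(\bM_{\ell..m}^\top,x)\cdot D(\bM_{\ell..m},\bM_{m..r} y)}$, and for the second use $\sqrt{D(\bM_{m..r}^\top,\bM_{\ell..m}^\top x)\cdot D(\bM_{m..r},y)}$. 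Setting $a_1=D(\bM_{\ell..m}^\top,x),\ a_2=D(\bM_{m..r}^\top,\bM_{\ell..m}^\top x),\ b_1=D(\bM_{\ell..m},\bM_{m..r} y),\ b_2=D(\bM_{m..r},y)$, the additivity of $D$ gives $a_1+a_2=D(\bM_{\ell..r}^\top,x)$ and $b_1+b_2=D(\bM_{\ell..r},y)$, so Cauchy--Schwarz yields $\sqrt{a_1b_1}+\sqrt{a_2b_2}\le\sqrt{(a_1+a_2)(b_1+b_2)}$ and the two linear terms combine into a single $C_{t-1}\lvl[k]{\eps}\sqrt{D(\bM_{\ell..r}^\top,x)D(\bM_{\ell..r},y)}$ contribution.

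Adding the two contributions gives the total bound $C_{t-1}(1+C_{t-1}/\log(n))\lvl[k]{\eps}\sqrt{D(\bM_{\ell..r}^\top,x)D(\bM_{\ell..r},y)}$. Since $C_{t-1}\le(1+1/\log n)^{\log n}/3\le e/3<1$ for $t\le\log n$, we get $C_{t-1}(1+C_{t-1}/\log n)\le C_{t-1}(1+1/\log n)=C_t$, closing the induction. I expect the main obstacle to be precisely this bookkeeping: keeping track of the correct $D$-factors on both sides so that the linear terms combine via Cauchy--Schwarz rather than doubling, which is exactly what prevents the per-level error growth from exceeding $(1+1/\log n)$.
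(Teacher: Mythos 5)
Your proof is correct and follows essentially the same approach as the paper: double induction on $t$ and $k$, expansion via \Cref{lemma:identity}, Cauchy--Schwarz plus \Cref{lemma:SV-norm} for the cross terms, \Cref{lemma:error} to control them, and the additivity of $D$ to collapse the two linear terms. The only cosmetic difference is that you work with the geometric-mean form of SV approximation and invoke $\sqrt{a_1b_1}+\sqrt{a_2b_2}\le\sqrt{(a_1+a_2)(b_1+b_2)}$ to merge the linear terms, whereas the paper uses the equivalent arithmetic-mean form, under which the two linear contributions combine by direct addition of $D$ values with no extra Cauchy--Schwarz step.
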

\begin{proof}
We again prove the lemma by induction. The base cases $t=0$ or $k=0$ are trivial by definition. For the general case, observe that  by \Cref{lemma:identity}, for every $x,y\in\bbR^w$ we have
\begin{equation}\label{eq:main}
x^\top \lvl[k]{\bDel_{\ell..r}} y 
= \sum_{i+j=k} x^\top \lvl[i]{\bDel_{\ell..m}} \lvl[j]{\bDel_{m..r}} y - \sum_{i+j=k-1} x^\top \lvl[i]{\bDel_{\ell..m}}\lvl[j]{\bDel_{m..r}} y + x^\top \bM_{\ell..m} \lvl[k]{\bDel_{m..r}}y + x^\top\lvl[k]{\bDel_{\ell..m}}\bM_{m..r} y. 
\end{equation}
To bound the first two summations in \Cref{eq:main}, observe that 
\begin{align*}
&\sum_{i+j=k} x^\top \lvl[i]{\bDel_{\ell..m}} \lvl[j]{\bDel_{m..r}} y - \sum_{i+j=k-1} x^\top \lvl[i]{\bDel_{\ell..m}}\lvl[j]{\bDel_{m..r}} y \\
&\le \sum_{i+j\in\{k-1,k\}} \norm[2]{(\lvl[i]\bDel_{\ell..m})^\top x }\norm[2]{\lvl[j]\bDel_{m..r}y } \comment{Cauchy-Schwarz}\\
&\le C_{t-1}^2\sum_{i+j\in\{k-1,k\}} \lvl[i]{\eps}\lvl[j]{\eps} \sqrt{D\left(\bM_{\ell..m}^{\top},x\right)D\left(\bM_{m..r},y\right)} \comment{\Cref{lemma:SV-norm}}\\
&\le C_{t-1}^2\cdot\frac{\lvl[k]{\eps}}{\log(n)}\cdot \frac{D\left(\bM_{\ell..m}^{\top},x\right)+D\left(\bM_{m..r},y\right)}{2} \comment{by \Cref{lemma:error} and AM-GM}\\
&\le C_{t-1}\cdot\frac{\lvl[k]{\eps}}{\log(n)}\cdot \frac{D\left(\bM_{\ell..r}^{\top},x\right)+D\left(\bM_{\ell..r},y\right)}{2} \comment{since $C_{t-1}\le 1$ and $\norm[2]{\bM_{\ell..m}},\norm[2]{\bM_{m..r}^\top} \le 1$}
\end{align*}
To bound the last two terms in \Cref{eq:main}, observe that 
\begin{align*}
&x^\top \bM_{\ell..m} \lvl[k]{\bDel_{m..r}}y + x^\top\lvl[k]{\bDel_{\ell..m}}\bM_{m..r} y\\
&\le C_{t-1}\cdot \lvl[k]{\eps}\cdot\frac{D(\bM_{m..r},y)+D(\bM_{m..r}^\top, \bM_{\ell..m}^\top x)+D(\bM_{\ell..m},\bM_{m..r}y)+D(\bM_{\ell..m}^\top, x)}{2}\\
&= C_{t-1}\cdot \lvl[k]{\eps}\cdot\frac{D(\bM_{\ell..r},y)+D(\bM_{\ell..r}^\top, x)}{2}. \comment{\Cref{lemma:weight}}
\end{align*} 
By summing up the two inequalities, we can conclude that 
$$x^\top \lvl[k]{\bDel_{\ell..r}} y \le C_t\cdot \lvl[k]{\eps}\cdot \frac{D(\bM_{\ell..r},y)+D(\bM_{\ell..r}^\top,x)}{2}.$$
Because negating $y$ does not change the bound above, we get
$$\abs{x^\top (\lvl[k]{\bM_{\ell..r}}-\bM_{\ell..r}) y} \le C_t\cdot \lvl[k]{\eps}\cdot \frac{D(\bM_{\ell..r},y)+D(\bM_{\ell..r}^\top,x)}{2},$$
i.e. $\lvl[k]{\bM_{\ell..r}}$ is a $C_t\lvl[k]{\eps}$-SV approximation of $\bM_{\ell..r}$. 
\end{proof}

Finally, let $\gamma=1/\log(n)$ and $k=O(\log(1/\eps)/\log(1/\gamma))$ be the minimum integer s.t. $\lvl[k]{\eps}\le \eps$. We claim that $\lvl[k]{\bM_{0..n}}$ can be implemented in space $\tilde{O}(\log(k)\log(nw))$, which implies the following result:
\begin{theorem}\label{thm:SV}
There is an algorithm which can compute an $\eps$-SV approximation of the random walk matrix $\bM_{0..n}$ in space $\tilde{O}(\log(nw)\log\log(1/\eps))$. 
\end{theorem}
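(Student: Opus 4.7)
The main lemma of this section already establishes that $\lvl[k]{\bM_{0..n}}$ is an $\eps$-SV approximation of $\bM_{0..n}$ for the choices $\gamma = 1/\log n$ and $k = O(\log(1/\eps)/\log\log n)$, so \Cref{thm:SV} reduces to giving a space-efficient evaluator for the matrix $\lvl[k]{\bM_{0..n}}$. The plan is to mirror the WPRG construction in \Cref{sec:final-WPRG}: unroll the recursion (\Cref{eq:recursion}) into a sum of products of the base matrices $\lvl[0]{\bM_{\ell..r}}$, enumerate the sum by recursion-tree paths, and evaluate each product via iterated matrix multiplication (\Cref{lemma:IMM}) with entries of the base matrices supplied on demand by \Cref{lemma:SV-base}.

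First I would establish an analogue of \Cref{lemma:expand} for the present setting: $\lvl[k]{\bM_{0..n}} = \sum_{(\mathsf{sq},\sigma)\in S} \sigma\,\lvl[0]{\bM_{\mathsf{sq}}}$, where each index is obtained by choosing one of at most $2k+1$ branches (from the two inner sums $\sum_{i+j\in\{k-1,k\}}$) at each of the $\log n$ levels of \Cref{eq:recursion}. This yields $|S|\le (2k+1)^{\log n}$, each $\mathsf{sq}$ of length at most $h = k\log n+1$, and an enumerator $E_S$ (built level-by-level exactly as in \Cref{sec:final-WPRG}) computable in space $O(\log k \cdot \log n) = \tilde{O}(\log n \log\log(1/\eps))$.

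The core subroutine, on input $(u,v,\text{index})$, decodes $(\mathsf{sq},\sigma)$ via $E_S$ and runs \Cref{lemma:IMM} on the sequence of $h$ base matrices $\lvl[0]{\bM_{i_{j-1}..i_j}}$, with the required entries supplied by \Cref{lemma:SV-base} via \Cref{lemma:sq-compose}. With $\lvl[0]{\eps}=\Theta(1/\log^2 n)$, \Cref{lemma:SV-base} uses $s_0 = \tilde{O}(\log(nw)\log\log n)$ space and produces entries of bit length $T_0 = O(\log n\log\log n)$, so \Cref{lemma:IMM} contributes $O(\log h \cdot \log(hwT_0)) = \tilde{O}(\log(nw)\log\log(1/\eps))$ space (using $\log h = O(\log\log(1/\eps)+\log\log n)$). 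To assemble an entry $\lvl[k]{\bM_{0..n}}[u,v]$, the algorithm iterates $(\mathsf{sq},\sigma)$ through $E_S$ and adds the signed per-term values into a bounded-precision accumulator on the work tape.

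The main obstacle is managing accumulator precision. Each term $\lvl[0]{\bM_{\mathsf{sq}}}[u,v]$ has magnitude $O(1)$ because our choices of $\lvl[0]{\eps}$ and $k$ ensure $h\lvl[0]{\eps} = o(1)$ and hence $\|\lvl[0]{\bM_{\mathsf{sq}}}\|_2 \le (1+\lvl[0]{\eps})^h = O(1)$, but there are $|S| = n^{O(\log k)}$ signed terms which may largely cancel, so the accumulator must carry $P = O(\log|S| + \log(1/\eps)) = \tilde{O}(\log n\log\log(1/\eps) + \log(1/\eps))$ bits of precision; truncating each per-term output to roughly $P$ bits before addition keeps the cumulative rounding error at most $|S|\cdot 2^{-P} \le \eps$. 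Composing the enumerator, the base-case subroutine, the iterated-multiplication, and the accumulator via \Cref{lemma:par-compose} and \Cref{lemma:sq-compose} yields total space $\tilde{O}(\log(nw)\log\log(1/\eps))$, which combined with the main lemma above gives \Cref{thm:SV}.
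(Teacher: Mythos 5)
The first half of your argument---invoking the main lemma of \Cref{sec:white-box} with $\gamma=1/\log n$ and $k=O(\log(1/\eps)/\log\log n)$ to conclude that $\lvl[k]{\bM_{0..n}}$ is an $\eps$-SV approximation---matches the paper. The gap is entirely in the claimed implementation, and it is fatal to the approach.

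Your bound $|S|\le (2k+1)^{\log n}$ is false. The recursion (\Cref{eq:recursion}) unrolls into a \emph{binary tree} over the dyadic intervals of $[0,n]$, and a choice of $(i,j,b)$ with $i+j+b$ equal to the local precision budget must be made at \emph{every internal node} of that tree, not once per level as your phrase ``enumerate the sum by recursion-tree paths'' suggests; the choices made in the left and right subtrees of a node multiply rather than coincide. The tree has up to $n-1$ internal nodes, and the correct bound from \Cref{lemma:expand} is $|S|\le n^{2k}$. This is essentially tight: as the paper observes, the terms are in bijection with placements of $k$ balls in $2n-1$ bins (odd bins holding at most one ball), so $|S|=n^{\Theta(k)}$ for $k$ not too large. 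Consequently the enumeration counter you keep on the work tape needs $\log|S|=\Omega(k\log n)$ bits, which after substituting $k=\Theta(\log(1/\eps)/\log\log n)$ is $\Omega(\log n\cdot\log(1/\eps)/\log\log n)$---far above the target $\tilde{O}(\log(nw)\log\log(1/\eps))$ whenever $\log(1/\eps)\gg\log n$; the same inflation hits your accumulator precision $P$. The paper explicitly flags this obstruction: ``we cannot naively enumerate each term of $\lvl[k]{\bM_{0..n}}$ in its expansion (\Cref{lemma:expand}) either, because there are $n^{O(k)}$ terms in total.'' (The WPRG proof in \Cref{sec:final-WPRG} \emph{can} afford the enumeration because the $2k\log n$ enumeration bits live in the \emph{seed}, which is externally supplied, not on the work tape of a deterministic algorithm.)

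The idea your proposal is missing is the alternative recursion of \Cref{lemma:new-recursion}: for any $h\le k$, $\lvl[k]{\bM_{\ell..r}}=\sum_{s=\ell+1}^{r}\lvl[h-1]{\bM_{\ell..s-1}}\bM_{s}\lvl[k-h]{\bM_{s..r}} - \sum_{s=\ell+1}^{r-1}\lvl[h-1]{\bM_{\ell..s}}\lvl[k-h]{\bM_{s..r}}$. Choosing $h=\lceil k/2\rceil$ makes every factor on the right have precision parameter at most $\lfloor k/2\rfloor$, so the recursion depth is $O(\log k)$ rather than $\log n$, while each level contributes only $O(\log(nwT))$ extra space for the $O(n)$-term sum (via \Cref{lemma:par-compose} and \Cref{lemma:sq-compose}). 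Together with the $\tilde{O}(\log(nw))$ base-case cost from \Cref{lemma:SV-base} this gives $\tilde{O}(\log k\cdot\log(nw))=\tilde{O}(\log(nw)\log\log(1/\eps))$. This halving-of-$k$ reformulation is the crux of the paper's proof of \Cref{thm:SV}, and your enumeration-based plan has no substitute for it.
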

\noindent
Note that $\Cref{thm:white-box}$ is also a direct corollary of this theorem: 
\begin{proof}[Proof of \Cref{thm:white-box}]
Compute a $(\eps/\sqrt{w})$-SV approximation of $\bM_{0..n}$, denoted by $\widetilde{\bM_{0..n}}$. By \Cref{thm:SV} this takes space $\tilde{O}(\log(nw)\log\log(1/\eps))$. Then consider $\vst,\ved$ as defined in \Cref{sec:prelim}, and output $\vst^\top \widetilde{\bM_{0..n}}\ved$. To prove the correctness, recall that $\vst^\top \bM_{0..n}\ved=\ex[x]{B(x)}$, which implies 
\begin{align*}
\abs{\vst^\top \widetilde{\bM_{0..n}}\ved-\ex[x\in\bits{n}]{B(x)}}
=\abs{\vst^\top (\widetilde{\bM_{0..n}}-\bM_{0..n})\ved}\le \eps/\sqrt{w}\norm{\vst}\norm{\ved}\le \eps
\end{align*}
by the fact that $\norm{\vst}=1$ and $\norm{\ved}\le\sqrt{w}$.
\end{proof}
\begin{remark}
Note that $\lvl[k]{\bM_{0..n}}$ does not satisfy the original definition of SV approximation in \cite{APPSV23} because it is not necessarily doubly stochastic. While every row and column in $\lvl[k]{\bM_{0..n}}$ does sum up to $1$, some of its entries might be negative. 
\end{remark}

\subsection{Space-efficient implementation}
Finally we prove that $\lvl[k]{\bM_{0..n}}$ can be implemented in space $\tilde{O}(\log(k)\log(nw))$. Note that a naive implementation of the recursion (\Cref{eq:recursion}) takes at least $O(\log(n)\log(wk))$ space. Furthermore, we cannot naively enumerate each term of $\lvl[k]{\bM_{0..n}}$ in its expansion (\Cref{lemma:expand}) either, because there are $n^{O(k)}$ terms in total, which takes at least $O(k\log(n))$ bits to enumerate. 

To reduce the space complexity, we will compute $\bM_{0..n}$ with a different recursive formula. The intuition of the new recursion is as follows. First observe that each term in the expansion of $\lvl[k]{\bM_{0..n}}$ corresponds to a way to put $k$ balls into $2n-1$ bins (indexed by $[2n-1]$), under the constraint that each odd-indexed bin contains at most one ball. To see why this is the case, observe that the original recursion corresponds to the following way to recursively enumerate all the ball-to-bin combinations: first we put $b\in\bits{}$ ball in the middle bin (which corresponds to the sign $(-1)^b$), then choose $i,j$ s.t. $i+j=k-b$, and then recursively put $i$ balls in the left $(n-1)$ bins (which corresponds to $\lvl[i]{\bM_{0..n/2}}$) and $j$ balls in the right $(n-1)$ bins (which corresponds to $\lvl[j]{\bM_{n/2..n}}$). 

Then observe that there is a different way to enumerate all the combinations with only $\ceil{\log(k)}$ levels of recursion as follows. First decide where the $h$-th ball is located, where $h=\ceil{k/2}$. If it is in an even-indexed bin, also decide how many balls are on the left and how many balls are on the right. Otherwise, there can be only one ball in the selected odd-indexed bin, and the numbers of balls on the left and right are fixed. Then for each choice, recursively enumerate the combinations on the left and right respectively. We claim that there is a corresponding recursive formula for $\lvl[k]{\bM_{0..n}}$ which can be implemented in only $\tilde{O}(\log(k)\log(nw))$ space.

To define this recursive formula, first we generalize the definition of $\lvl[k]{\bM_{\ell..r}}$ to $(\ell,r)\not\in \BS_n$. For any $(\ell,r)$ s.t. $0\le \ell< r\le n$, define $\LCA(\ell,r)$ as follows. Let $t$ be the largest integer such that there exists a multiple of $2^t$ in the range $(\ell,r)$. Then we define $\LCA(\ell,r)$ to be the unique multiple of $2^t$ in $(\ell,r)$.\footnote{If there are two consecutive multiples of $2^t$ in $(\ell,r)$, then $2^{t+1}$ divides one of them, which violates the definition of $t$.} Observe that for $(\ell,r)\in\BS_n$ s.t. $r-\ell>1$, $\LCA(\ell,r)=(\ell+r)/2$. Therefore, we can generalize the recursion (\Cref{eq:recursion}) to any $r-\ell>1$ by defining $m=\LCA(\ell,r)$. We also generalize the same recursion to $k=0,(\ell,r)\not\in\BS_n$, so that $\lvl[0]{\bM_{\ell..r}}=\lvl[0]{\bM_{\ell..m}}\lvl{\bM_{m..r}}$.\footnote{There is an empty summation term $\sum_{i+j=-1}$ which should be treated as $0$.} For the degenerate case $\ell=r$ we define $\lvl[k]{\bM_{\ell..r}}=\bI_w$. Next we want to prove the following identity which naturally gives a recursive algorithm for $\bM_{0..n}$. This identity is essentially the recursive enumeration we described above, except that we utilize $\lvl[k]{\bM_{s-1..s}}=\bM_s$ to get some cancellation which simplifies the recursion. 

\begin{lemma}\label{lemma:new-recursion}
For every $(\ell,r)$ s.t. $r>\ell$ and every $h,k\in\bbN$ s.t. $h\le k$, we have the following identity:
\begin{equation}\label{eq:new-recursion}
\lvl[k]{\bM_{\ell..r}}=\sum_{s=\ell+1}^{r}\lvl[h-1]{\bM_{\ell..s-1}}\bM_{s}\lvl[k-h]{\bM_{s..r}} - \sum_{s=\ell+1}^{r-1}\lvl[h-1]{\bM_{\ell..s}}\lvl[k-h]{\bM_{s..r}}.
\end{equation}
\end{lemma}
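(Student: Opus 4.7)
My plan is to prove \Cref{eq:new-recursion} by strong induction on $r-\ell$, establishing the identity simultaneously for all $k\ge 1$ and $h\in[1,k]$. The base case $r-\ell=1$ is immediate: the second sum is empty, the first sum reduces to its single $s=r$ summand, which by the degenerate conventions $\lvl[h-1]{\bM_{r-1..r-1}}=\bI$ and $\lvl[k-h]{\bM_{r..r}}=\bI$ equals $\bI\cdot\bM_r\cdot\bI=\bM_r$, matching $\lvl[k]{\bM_{r-1..r}}=\bM_r$. For the inductive step with $r-\ell>1$ and $m:=\LCA(\ell,r)$, the structural fact I would use is that $\LCA(s,r)=m$ for $s\in(\ell,m)$ and $\LCA(\ell,s)=m$ for $s\in(m,r]$; this makes the generalized level-$0$ matrices factor through $m$ in the expected way, e.g.~$\lvl[0]{\bM_{s..r}}=\lvl[0]{\bM_{s..m}}\lvl[0]{\bM_{m..r}}$ for $s\in(\ell,m)$.

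I would handle the inductive step in two sub-stages. First, establish the $h=1$ case, for which the cleanest route is the generating function $F_{\ell..r}(z):=\sum_{k\ge 0}\lvl[k]{\bM_{\ell..r}}z^k$. The defining recursion for $\lvl[k]{\bM_{\ell..r}}$ translates to
\[
F_{\ell..r}(z) = \lvl[0]{\bM_{\ell..r}} - \lvl[0]{\bM_{\ell..m}}\lvl[0]{\bM_{m..r}} + (1-z)F_{\ell..m}(z)F_{m..r}(z),
\]
and the $h=1$ claim, summed against $z^k$ over $k\ge 1$, collapses to the single identity
\[
F_{\ell..r}(z) = \lvl[0]{\bM_{\ell..r}} + z\sum_{s=\ell+1}^{r}\lvl[0]{\bM_{\ell..s-1}}\bM_s F_{s..r}(z) - z\sum_{s=\ell+1}^{r-1}\lvl[0]{\bM_{\ell..s}}F_{s..r}(z).
\]
Expanding $F_{\ell..m}(z)$ and $F_{m..r}(z)$ by the inductive hypothesis and collapsing products via $F_{s..r}(z)=(1-z)F_{s..m}(z)F_{m..r}(z)$ for $s<m$, together with the level-$0$ factorizations just noted, then splitting the summations into the $s\le m$ and $s>m$ halves and absorbing the diagonal $s=m$ contribution, the desired identity drops out.

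Second, pass from the $h$-claim to the $(h+1)$-claim (for any $h<k$) by substituting the established $h=1$ identity into each $\lvl[k-h]{\bM_{s..r}}$ appearing on the RHS of the $h$-claim, reorganizing the resulting double sum by the inner index $t$, and invoking the $h$-claim on the shorter intervals $[\ell,t-1]$ and $[\ell,t]$ to identify the coefficient of $\bM_t\lvl[k-h-1]{\bM_{t..r}}$ as $\lvl[h]{\bM_{\ell..t-1}}-\lvl[h-1]{\bM_{\ell..t-1}}$ and the coefficient of $\lvl[k-h-1]{\bM_{t..r}}$ alone as $-\lvl[h]{\bM_{\ell..t}}+\lvl[h-1]{\bM_{\ell..t-1}}\bM_t$. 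The two $\lvl[h-1]$ cross-terms then cancel exactly, yielding the $(h+1)$-claim. The main obstacle I foresee is boundary-term bookkeeping: the asymmetric index ranges of the two sums (one ranging to $r$, the other only to $r-1$), the $t=r$ contribution that carries a lone $\bM_r$ with no following $\lvl[k-h-1]$ factor, the degenerate $t=\ell+1$ terms (which vanish for structural reasons), and the ``extra'' $s=t-1$ summands produced when matching partial sums against the $h$-claim on $[\ell,t-1]$ and $[\ell,t]$, all need careful tracking for the cross-term cancellations to land cleanly.
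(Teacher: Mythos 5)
Your proposal is correct, and it takes a genuinely different route from the paper's proof. Both arguments do the outer induction on $r-\ell$ and both exploit that for $m=\LCA(\ell,r)$ and any $s$ strictly between $\ell$ and $m$ one still has $\LCA(s,r)=m$, so the defining recursion and the generalized level-$0$ factorization pass through $m$ (a fact you flag but should verify, e.g.\ by observing that if $(s,r)\in\BS_n$ then $s$ would be a multiple of a power of $2$ exceeding the one defining $m$, contradicting the maximality in the definition of $\LCA$). But from there the two proofs diverge. The paper fixes an arbitrary $h\le k$ and proves the identity for $(\ell,r)$ in a single pass: it expands each term on the right-hand side that straddles $m$ using the defining recursion (\ref{eq:recursion}), inserts two dummy expansions at the boundary indices $s=m$ and $s=m+1$, reorganizes the resulting eight blocks of terms, and recognizes each block as an instance of the inductive hypothesis on the half-intervals $[\ell,m]$ or $[m,r]$; the residue collapses back to the recursion for $\lvl[k]{\bM_{\ell..r}}$. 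Your proof instead runs an inner induction on $h$ nested inside the outer one on $r-\ell$. For the base $h=1$ you package the recursion and the target identity into a single generating-function relation $F_{\ell..r}(z)=\lvl[0]{\bM_{\ell..r}}-\lvl[0]{\bM_{\ell..m}}\lvl[0]{\bM_{m..r}}+(1-z)F_{\ell..m}F_{m..r}$, which makes the $h=1$ verification a clean algebraic collapse (after expanding $F_{\ell..m}$ by the IH, pushing $\lvl[0]{\bM_{\ell..m}}$ through the $(m,r)$ block, and cancelling against the IH for $F_{m..r}$). For $h\to h+1$ you substitute the $h=1$ identity into $\lvl[k-h]{\bM_{s..r}}$, reindex by $t$, and identify the two inner coefficients via the $h$-claim on $[\ell,t-1]$ and $[\ell,t]$; I checked that the $\lvl[h-1]{\bM_{\ell..t-1}}\bM_t$ cross-terms and the $t=r$ boundary term (matched against the standalone $s=r$ summand) cancel exactly, and that the missing $t=\ell+1$ contribution is genuinely zero because $\lvl[h]{\bM_{\ell..\ell+1}}=\lvl[h]{\bM_{\ell..\ell}}\bM_{\ell+1}=\bM_{\ell+1}$. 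The trade-off: the paper's single-shot argument avoids the inner induction on $h$ but pays for it with a larger table of terms to track, whereas your generating-function trick gives a slick $h=1$ case and a conceptually transparent ``boosting'' step, at the cost of one extra layer of induction.
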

Surprisingly, this new recursion coincides with the recursion of Richardson iteration from the inverse Laplacian perspective. This shows that the construction in \cite{CHLTW23} is actually equivalent to the \cite{CL20} construction that we use in this paper. We briefly discuss this equivalence in \Cref{sec:equivalence}.

Before we prove the identity, we show that the new recursion does imply an algorithm that runs in space $\tilde{O}(\log(k)\log(nw))$. 
Consider the algorithm which recursively computes $\lvl[k]{\bM_{\ell..r}}$ using the formula in \Cref{lemma:new-recursion} with $h=\ceil{k/2}$. Observe that the right hand side of \Cref{eq:new-recursion} involves at most $O(n)$ matrices. In addition, given all the matrices on the right hand side, the computation of $\lvl[k]{\bM_{\ell..r}}$ takes only $O(\log(nkwT))$ additional bits, where $T$ is the maximum bit length of all the matrix entries. From \Cref{lemma:expand} and \Cref{lemma:SV-base} we can see that $T$ is at most $\tilde{O}(k\log^2(n))$, so $O(\log(nwkT))=\tilde{O}(\log(nwk))$. Finally, observe that each matrix on the right hand side has precision parameter at most $\max(h-1,k-h)\le \floor{k/2}$. Therefore the recursion reaches the $\lvl[0]{\bM_{\ell..r}}$ base cases after at most $\ceil{\log(k)}$ levels. By repeatedly applying \Cref{lemma:sq-compose} and \Cref{lemma:par-compose} we can conclude that the space complexity is at most $\tilde{O}(\log(k)\log(nw))+O(s_{\mathrm{base}})$, where $s_{\mathrm{base}}$ is the maximum space complexity of computing $\lvl{\bM_{\ell..r}}$. The base case complexity $s_\mathrm{base}$ is indeed $\tilde{O}(\log(nw))$, which we prove in \Cref{appendix:new-recursion-base}.

Finally, we prove the identity in \Cref{lemma:new-recursion}.
\begin{proof}[Proof of \Cref{lemma:new-recursion}]
We prove the claim by induction on $r-\ell$. Let $m=\LCA(\ell,r)$. For the base case $r-\ell=1$, the lemma says $\lvl[k]{\bM_{\ell..r}}=\lvl[h-1]{\bM_{\ell..\ell}}\bM_r{\bM_{r..r}}$, which is trivially true. Next we prove the general case by induction. For each matrix $\lvl[k']{\bM_{\ell'..r'}}$ on the right hand side s.t. $\ell'<m<r'$, we expand $\lvl[k']{\bM_{\ell'..r'}}$ using (\ref{eq:recursion}). Note that $\LCA(\ell',r')$ is also $m$. In addition, for the $s=m$ term in the first summation we apply the dummy expansion $\lvl[k-h]\bM_{m..r}=\sum_{a+b=k-h}\lvl[a]{\bM_{m..m}}\lvl[b]{\bM_{m..r}}-\sum_{a+b=k-h-1}\lvl[a]{\bM_{m..m}}\lvl[b]{\bM_{m..r}}$. Similarly, for the $s=m+1$ term in the first summation we also expand $\lvl[k-h]\bM_{\ell..m}=\sum_{a+b=h-1}\lvl[a]{\bM_{\ell..m}}\lvl[b]{\bM_{m..m}}-\sum_{a+b=h-2}\lvl[a]{\bM_{\ell..m}}\lvl[b]{\bM_{m..m}}$.
After rearranging we get that the right hand side equals to
\begin{align*}
&\sum_{s=\ell+1}^{m} \sum_{a+b=k-h} \lvl[h-1]{\bM_{\ell..s-1}}\bM_s\lvl[a]{\bM_{s..m}}\lvl[b]{\bM_{m..r}} 
&&-\sum_{s=\ell+1}^{m-1}\sum_{a+b=k-h} \lvl[h-1]{\bM_{\ell..s}}\lvl[a]{\bM_{s..m}}\lvl[b]{\bM_{m..r}}\\
&-\sum_{s=\ell+1}^{m} \sum_{a+b=k-h-1} \lvl[h-1]{\bM_{\ell..s-1}}\bM_s\lvl[a]{\bM_{s..m}}\lvl[b]{\bM_{m..r}}
&&+\sum_{s=\ell+1}^{m-1}\sum_{a+b=k-h-1} \lvl[h-1]{\bM_{\ell..s}}\lvl[a]{\bM_{s..m}}\lvl[b]{\bM_{m..r}}\\
&+\sum_{s=m+1}^r\sum_{a+b=h-1} \lvl[a]{\bM_{\ell..m}}\lvl[b]{\bM_{m..s-1}}\bM_s\lvl[k-h]{\bM_{s..r}}
&&-\sum_{s=m+1}^r\sum_{a+b=h-1} \lvl[a]{\bM_{\ell..m}}\lvl[b]{\bM_{m..s}}\lvl[k-h]{\bM_{s..r}}\\
&-\sum_{s=m+1}^r\sum_{a+b=h-2}\lvl[a]{\bM_{\ell..m}}\lvl[b]{\bM_{m..s-1}}\bM_s\lvl[k-h]{\bM_{s..r}}
&&+\sum_{s=m+1}^r\sum_{a+b=h-2} \lvl[a]{\bM_{\ell..m}}\lvl[b]{\bM_{m..s}}\lvl[k-h]{\bM_{s..r}}\\
& &&-\lvl[h-1]{\bM_{\ell..m}}\lvl[k-h]{\bM_{m..r}}.
\end{align*}
Note that the first summation in \Cref{eq:new-recursion} expands to the terms on the left, and the second summation in \Cref{eq:new-recursion} expands to the terms on the right.

Now we classify all the terms in the first line by $b$, and take out the right factor $\lvl[b]{\bM_{m..r}}$. For any fixed $b\le k-h $ we get the sum
$$\left(\sum_{s=\ell+1}^m\lvl[h-1]{\bM_{\ell..s-1}}\bM_s\lvl[k-b-h]{\bM_{s..m}}-\sum_{s=\ell+1}^{m-1} \lvl[h-1]{\bM_{\ell..s}}\lvl[k-b-h]{\bM_{s..m}}\right)\lvl[b]{\bM_{m..r}}.$$
Observe that this is exactly $\lvl[k-b]{\bM_{\ell..m}}\lvl[b]{\bM_{m..r}}$ by induction. Therefore, we can see that the first line is exactly $\sum_{b=0}^{k-h}\lvl[k-b]{\bM_{\ell..m}}\lvl[b]{\bM_{m..r}}$. Similarly, we can get that the second line is $-\sum_{b=0}^{k-h-1}\lvl[k-b-1]{\bM_{\ell..m}}\lvl[b]{\bM_{m..r}}$. For the third and fourth lines, we can also classify the terms by $a$ and take out the left factor $\lvl[a]{\bM_{\ell..m}}$ to get 
to get $\sum_{a=0}^{h-1}\lvl[a]{\bM_{\ell..m}}\lvl[k-a]{\bM_{m..r}}$ and $-\sum_{a=0}^{h-2}\lvl[a]{\bM_{\ell..m}}\lvl[k-a-1]{\bM_{m..r}}$ respectively. Finally, collect all the simplified terms (including the only term in the fifth line in the expansion), and we get  
\begin{align*}
&\sum_{b=0}^{k-h}\lvl[k-b]{\bM_{\ell..m}}\lvl[b]{\bM_{m..r}}+\sum_{a=0}^{h-1}\lvl[a]{\bM_{\ell..m}}\lvl[k-a]{\bM_{m..r}}-\sum_{b=0}^{k-h-1}\lvl[k-b]{\bM_{\ell..m}}\lvl[b]{\bM_{m..r}}-\sum_{a=0}^{h-2}\lvl[a]{\bM_{\ell..m}}\lvl[k-a-1]{\bM_{m..r}}-\lvl[h-1]{\bM_{\ell..m}}\lvl[k-h]{\bM_{m..r}}\\
&=\sum_{a+b=k}\lvl[a]{\bM_{\ell..m}}\lvl[b]{\bM_{m..r}} - \sum_{a+b=k-1}\lvl[a]{\bM_{\ell..m}}\lvl[b]{\bM_{m..r}}\\
&= \lvl[k]{\bM_{\ell..r}}.
\end{align*}
\end{proof}

\ifshowauthors
\section*{Acknowledgements}
We want to thank Xin Lyu, Edward Pyne, Salil Vadhan and Hongxun Wu for helpful discussions, and anonymous ITCS reviewers for many helpful comments.
\fi

\printbibliography

\appendix 
\section{Proof of  \texorpdfstring{\Cref{lemma:error}}{}}\label{appendix:error}
\begin{proof}
\begin{align*}
\sum_{i+j\in\{k-1,k\}} \lvl[i]{\eps}\lvl[j]{\eps} 
&= \frac{(k+1)^2\lvl[k]{\eps}}{10\log(n)} \cdot \left(\gamma \sum_{i+j=k}\frac{1}{(i+1)^2 (j+1)^2} + \sum_{i+j=k-1}\frac{1}{(i+1)^2(j+1)^2}\right) \\
&= \frac{(k+1)^2\lvl[k]{\eps}}{10\log(n)} \cdot \left(\gamma \sum_{i+j=k}\frac{\frac{1}{(i+1)^2}+\frac{1}{(j+1)^2}}{(i+1)^2+(j+1)^2} + \sum_{i+j=k-1}\frac{\frac{1}{(i+1)^2}+\frac{1}{(j+1)^2}}{(i+1)^2+(j+1)^2}\right)\\
&\le \frac{(k+1)^2\lvl[k]{\eps}}{10\log(n)} \cdot \left(\gamma \sum_{i+j=k}\frac{\frac{1}{(i+1)^2}+\frac{1}{(j+1)^2}}{(k+2)^2/2} + \sum_{i+j=k-1}\frac{\frac{1}{(i+1)^2}+\frac{1}{(j+1)^2}}{(k+1)^2/2}\right)\\
&\le \frac{\lvl[k]{\eps}}{10\log(n)} \cdot \left(4(1+\gamma)\sum_{i=0}^{k} \frac{1}{(i+1)^2} \right)\\
&\le \frac{\lvl[k]{\eps}}{\log(n)}.
\end{align*}
\end{proof}

\section{Proof of  \texorpdfstring{\Cref{lemma:expand}}{}}\label{appendix:expansion}

\begin{proof}
For $\mathsf{sq}_\ell=(i_0,\ldots,i_{h_\ell})\in \mathsf{IS}_{\ell..m}$ and $\mathsf{sq}_r=(j_0,\ldots,j_{h_r})\in \mathsf{IS}_{m..r}$, define $$\mathsf{sq}_\ell \| \mathsf{sq}_r=(i_0,\ldots,i_{h_\ell}=j_0,\ldots,j_{h_r}).$$
Observe that $\mathsf{sq}_\ell||\mathsf{sq}_r$ has length $h_\ell+h_r$, and that $\lvl[0]{\bM_{\mathsf{sq}_\ell||\mathsf{sq}_r}}=\lvl[0]{\bM_{\mathsf{sq}_\ell}}\cdot \lvl[0]{\bM_{\mathsf{sq}_r}}$.

With the notation above, we prove the lemma by induction over $t:=\log(r-\ell)$ and $k$. The base cases $t=0$ or $k=0$ are trivial. Consider the recursive formula (\ref{eq:recursion}), and let $\lvl[i]{S_{\ell}},\lvl[j]{S_{r}}$ denote the corresponding sets of $\lvl[i]{\bM_{\ell..m}}$ and $\lvl[j]{\bM_{m..r}}$ respectively. By distributive law, we get 
$$\lvl[k]{\bM_{\ell..r}}=\sum_{b\in\bits{}} \sum_{i+j=k-b}\sum_{(\mathsf{sq_\ell},\sigma)\in \lvl[i]{S_{\ell}}} \sum_{(\mathsf{sq_r},\sigma)\in \lvl[j]{S_{r}}} (-1)^b\sigma_\ell\sigma_r \lvl[0]{\bM_{\mathsf{sq}_\ell \| \mathsf{sq}_r}}.$$
Therefore, we can define 
$$S=\bigcup_{b\in\bits{0,1},i+j=k-b}\{(\mathsf{sq}_\ell \| \mathsf{sq}_r,(-1)^b\sigma_\ell\sigma_r):(\mathsf{sq_\ell},\sigma)\in \lvl[i]{S_{\ell}} \wedge (\mathsf{sq_r},\sigma)\in \lvl[j]{S_{r}}\},$$
which satisfies the first condition. Based on the induction hypothesis, the length of each $\mathsf{sq}_\ell \| \mathsf{sq}_r$ is at most $(i+j)(t-1)+2\le k(t-1)+2\le kt+1$, and the size of $S$ is at most 
$$\sum_{i+j\in\{k-1,k\}}\abs{\lvl[i]{S_\ell}}\abs{\lvl[j]{S_r}}\le (2k+1)2^{2k(t-1)}\le 2^{2kt}.$$

\end{proof}

\section{Base-case space complexity}\label{appendix:new-recursion-base}
In this section we prove the following claim.
\begin{lemma}
There is an algorithm which for every $\ell<r$ can compute $\lvl{\bM_{\ell..r}}$ in space $\tilde{O}(\log(nw))$. 
\end{lemma}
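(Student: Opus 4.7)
The plan is to show that the generalized recursion $\lvl[0]{\bM_{\ell..r}} = \lvl[0]{\bM_{\ell..m}}\lvl[0]{\bM_{m..r}}$ with $m=\LCA(\ell,r)$ unfolds into a product of at most $O(\log n)$ base-case matrices whose index intervals all lie in $\BS_n$, and then combine the space-efficient base-case construction of \Cref{lemma:SV-base} with the iterated matrix multiplication of \Cref{lemma:IMM} via sequential composition.

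First I would argue that for any $0\le\ell<r\le n$, the recursion $\lvl[0]{\bM_{\ell..r}}=\lvl[0]{\bM_{\ell..m}}\lvl[0]{\bM_{m..r}}$ (with $m=\LCA(\ell,r)$) terminates with at most $t=O(\log n)$ leaves, each of which is an interval in $\BS_n$. This is the standard dyadic decomposition: if $m=\LCA(\ell,r)$ is the unique multiple of $2^T$ in $(\ell,r)$, then the left subinterval $(\ell,m)$ contains no multiple of $2^T$, so its own LCA sits at a strictly smaller scale; iterating, the depth on each side is bounded by $O(\log n)$, and every interval that appears is either a leaf in $\BS_n$ or splits into two strictly smaller pieces. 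Hence the product expansion is $\lvl[0]{\bM_{\ell..r}}=\prod_{j=1}^{t}\lvl[0]{\bM_{\ell_j..r_j}}$ with $t=O(\log n)$ and $(\ell_j,r_j)\in\BS_n$.

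Next, each factor $\lvl[0]{\bM_{\ell_j..r_j}}$ is a $(\lvl[0]{\eps}/3)$-SV approximation of $\bM_{\ell_j..r_j}$. With $\gamma=1/\log(n)$ we have $\lvl[0]{\eps}=\gamma/(10\log(n))=\Theta(1/\log^2(n))$, so $\log(1/\lvl[0]{\eps})=O(\log\log(n))$. Plugging this into \Cref{lemma:SV-base} shows that each factor can be computed in space $\tilde{O}(\log(nw))$, with entries of bit length $T=O(\log(n)\log(1/\lvl[0]{\eps}))=\tilde{O}(\log n)$.

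Now I would apply \Cref{lemma:IMM} to the product of $t=O(\log n)$ such $w\times w$ matrices with entries of bit length $\tilde{O}(\log n)$. This gives space $O(\log(t)\log(twT))=\tilde{O}(\log w)$ \emph{once its input tape already contains the $t$ factors}. To actually feed the factors, I would instead use \Cref{lemma:sq-compose}: the outer algorithm is the iterated multiplication of \Cref{lemma:IMM}, and the inner algorithm, on an index $(j,a,b)$, computes the $(a,b)$-entry of $\lvl[0]{\bM_{\ell_j..r_j}}$ by first recovering $(\ell_j,r_j)$ from the dyadic decomposition of $(\ell,r)$ (routine, in space $O(\log n)$) and then invoking the base-case algorithm of \Cref{lemma:SV-base}. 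The inner algorithm runs in space $\tilde{O}(\log(nw))$, and the outer in space $\tilde{O}(\log w)$, so the composed space is $\tilde{O}(\log(nw))$ as required.

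The only mildly subtle point is the $O(\log n)$ bound on the number of leaves in the LCA recursion; everything else is a direct application of the composition lemmas. I do not expect serious obstacles, but care is needed in the inner algorithm to recover the sequence $(\ell_j,r_j)$ from the index $j$ using only $O(\log n)$ working space, which is straightforward since it only requires arithmetic on $O(\log n)$-bit numbers to locate successive LCAs.
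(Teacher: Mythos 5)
Your proposal follows the same route as the paper: unfold the generalized $k=0$ recursion into a product of $\BS_n$-indexed base cases, then compose \Cref{lemma:SV-base}, \Cref{lemma:sq-compose} and \Cref{lemma:IMM}. The space-composition part is handled correctly. The one step that is not actually justified as written is the jump from ``the recursion depth on each side is $O(\log n)$'' to ``the product has $t=O(\log n)$ factors.'' A binary recursion of depth $O(\log n)$ generically has $n^{O(1)}$ leaves, so a depth bound alone is not enough. What is really going on (and what the paper proves) is that the recursion is a \emph{caterpillar}: when $(\ell,m)\not\in\BS_n$ and $m'=\LCA(\ell,m)$, the interval $(m',m)$ automatically lies in $\BS_n$ and is a leaf, so at every split one child terminates immediately and the other continues. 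To see this you need the arithmetic claim that if $m=c2^t$ and $m'=c'2^{t'}$ with $c,c'$ odd, and $m'=\LCA(\ell,m)$ for $m$ itself an LCA of a larger range, then $t>t'$ and $m=m'+2^{t'}$ (otherwise $m'+2^{t'}$, a multiple of $2^{t'+1}$, would lie strictly between $m'$ and $m$, contradicting the maximality defining $\LCA(\ell,m)$). It is this caterpillar structure — not merely the depth bound — that makes the number of factors $O(\log n)$ rather than polynomial; without it your count of $t$ is unjustified. Appealing to ``the standard dyadic decomposition'' is reasonable shorthand, but the lemma is essentially asking you to verify that the LCA recursion does produce that decomposition, so this is exactly the point that needs to be spelled out. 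Everything after that (bit-length $\tilde O(\log n)$ at $\gamma=1/\log n$, applying \Cref{lemma:IMM}, and composing with the base-case algorithm via \Cref{lemma:sq-compose}) is correct and mirrors the paper.
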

\begin{proof}
We claim that it is possible to output the indices $(\ell_0,r_0)\in\BS_n$ of all the factors $\lvl{\bM_{\ell_0..r_0}}$ in the expansion of $\lvl{\bM_{\ell..r}}$ in space $O(\log(n))$, and there are only $2\log(n)$ factors. The claim then follows by \Cref{lemma:SV-base}, \Cref{lemma:sq-compose} and \Cref{lemma:IMM}. 

For the base cases $(\ell,r)\in\BS_n$ the claim is straightforward. For $(\ell,r)\not\in\BS_n$, first compute $m=\LCA(\ell,r)$, and note that $\lvl{\bM_{\ell..r}}=\lvl{\bM_{\ell..m}}\lvl{\bM_{m..r}}$. For the factors of $\lvl{\bM_{\ell..m}}$, while $(\ell,m)\not\in \BS_n$, we repeatedly compute $m'=\LCA(\ell,m)$, then output $(m',m)$, and replace $m$ with the value of $m'$. Eventually when $(\ell,m)\in\BS_n$ we output $(\ell,m)$. 

Note that $\lvl{\bM_{\ell..m}}=\lvl{\bM_{\ell..m'}}\lvl{\bM_{m'..m}}$, so if we can prove that $(m',m)\in\BS_n$, then what we output is exactly the indices of the expansion (except that the output order is reversed, which is easy to deal with). To prove that $(m',m)\in\BS_n$, assume that $m'=c'\cdot 2^{t'}$ and $m=c\cdot 2^t$, for some odd integers $c',c$. In our algorithm, we always have $m'=\LCA(\ell,m)$ and $m=\LCA(\ell,r')$ for some $r'>m$. Because $\ell<m'<m<r'$, by definition of $\LCA$ we must have $t>t'$. We claim that we must have $m=m'+2^{t'}$, which implies $(m',m)\in\BS_n$. 

If this is not the case, then we must have $m'<m'+2^{t'}<m$ because $m$ is also a multiple of $m$. Then observe that $m'+2^{t'}$ is a multiple of $2^{t'+1}$, which contradicts to the fact that $m'=\LCA(\ell,m)$. 

Now we have proved how to output the expansion indices of $\lvl{\bM_{\ell.m}}$, and for $\lvl{\bM_{m..r}}$ the proof is basically the same. Finally, to prove that there are at most $2\log(n)$ factors, observe that every time we replace $m$ with $m'$, the exponent $t$ strictly decreases. Because we always have $0\le t<\log(n)$, the procedure above can repeat at most $\log(n)$ iterations. Therefore both $\lvl{\bM_{\ell.m}}$ and $\lvl{\bM_{m..r}}$ have at most $\log(n)$ factors.

\end{proof}
\section{Equivalence between the constructions in    \texorpdfstring{\cite{CHLTW23}}{} and    \texorpdfstring{\cite{CL20}}{}}\label{sec:equivalence}
The construction in \cite{CHLTW23} is based on the inverse Laplacian perspective of derandomization~\cite{AKMPSV20} and preconditioned Richardson iteration, which we briefly recap as follows. Consider the block matrix $\bW=(\bR^{w\times w})^{(n+1)\times (n+1)}$ s.t. $\bW[i-1,i]=\bM_i$ for every $i\in[n]$, and other entries of $\bW$ are $0$. The Laplacian is defined as $\bL:=\bI-\bW$, which is an invertible matrix. It can be shown that  each entry $\bL^{-1}[i,j]$ in the inverse Laplacian $\bL^{-1}$ is exactly $\bM_{i..j}$ (where $\bM_{i..i}=\bI_w$ and $\bM_{i..j}=0$ if $i>j$). 

To approximate $\bM_{0..n}$ within error $\eps$, \cite{CHLTW23} followed the preconditioned Richardson iteration approach, which first constructs a ``mild approximation" of $\bL^{-1}$ denoted by $\widetilde{\bL^{-1}}$. Their construction of $\widetilde{\bL^{-1}}$ is based on the shortcut graph structure, and one can verify that their $\widetilde{\bL^{-1}}[i,j]$ is actually the same as $\lvl{\bM_{i..j}}$ defined in this paper.

Given $\bL^{-1}$, the construction based on Richardson iteration is defined as $\lvl[k]{(\bL^{-1})}=\widetilde{\bL^{-1}}\sum_{i=0}^k (\bI-\bL \widetilde{\bL^{-1}})^i$, and $\lvl[k]{(\bL^{-1})}[0,n]$ is the final output. 
Observe that this formula satisfies the recursion $\lvl[k]{(\bL^{-1})}=\widetilde{\bL^{-1}}+\lvl[k-1]{(\bL^{-1})}(\bI-\bL\widetilde{\bL^{-1}})$. In addition, observe that for every $i<j$, 
$(\bI-\bL\widetilde{\bL^{-1}})[i,j]=\bM_{i+1}\lvl{\bM_{i+1..j}}-\lvl{\bM_{i..j}}$,\footnote{This is in fact the \emph{local consistency error} defined in \cite{CH20}, as observed in \cite{Hoza21}.} and other entries of $\bI-\bL\widetilde{\bL^{-1}}$ are $0$.
Therefore, for every $\ell\le r$ we have 
$$\lvl[k]{(\bL^{-1})}[\ell,r]=\lvl{\bM_{\ell..r}}+\sum_{s=\ell+1}^{r}\lvl[k-1]{(\bL^{-1})}[\ell,s-1]\left(\bM_s\lvl{\bM_{s..r}}-\lvl{\bM_{s-1..r}}\right).$$
Comparing it with the  special case $h=k$ of \Cref{lemma:new-recursion}:
$$
\lvl[k]{\bM_{\ell..r}}=\sum_{s=\ell+1}^{r}\lvl[k-1]{\bM_{\ell..s-1}}\bM_{s}\lvl{\bM_{s..r}} - \sum_{s=\ell+1}^{r-1}\lvl[k-1]{\bM_{\ell..s}}\lvl{\bM_{s..r}}.$$
Using the fact that $\lvl[k-1]{(\bL^{-1})}[\ell,\ell]=\bI$, it is relatively easy to prove via induction  that $\lvl[k]{(\bL^{-1})}[\ell,r]=\lvl[k]{\bM_{\ell,r}}$.

\end{document}